\documentclass[submission,copyright,creativecommons]{eptcs}
\providecommand{\event}{GASCOM 2026} 

\newcommand{\wtc}[2]{\widetilde{C}_{#1}}
\newcommand{\hnq}{H_n(q)}
\newcommand{\sn}{\mathfrak{S}_n}
\newcommand{\zsn}{\mathbb{Z}[\sn]}
\newcommand{\dotsc}{\ldots}
\newcommand{\mathbb}{\mathbf}
\newcommand{\mfs}[1]{\mathfrak{S}_{#1}}
\newcommand{\defeq}{:=}

\newcommand{\avoidsp}{avoids the patterns $3412$ and $4231${}}
\newcommand{\avoidp}{avoid the patterns $3412$ and $4231${}}
\newcommand{\avoidingp}{avoiding the patterns $3412$ and $4231${}}

\newcommand{\ntnsp}{\negthinspace}
\newcommand{\ntksp}{\negthickspace}
\newcommand{\nTksp}{\negthickspace\negthickspace}

\newcommand{\ssm}{\setminus}

\newcommand{\net}[1]{\mathcal F_{#1}}
\newcommand{\singdeg}{\mathrm{singdeg}}
\newcommand{\pgap}{\mathrm{gap_{3412}}}
\newcommand{\cnet}[1]{\mathcal F^\bullet_{#1}}

\newcommand{\phm}{\phantom M}

\newcommand{\type}{\mathrm{type}}

\newcommand{\zqq}{\mathbb{Z}[\qp12, \qm12]}
\newcommand{\qp}[2]{q^{\frac{#1}{#2}}}
\newcommand{\qm}[2]{q^{\negthinspace\Bar\,\frac{#1}{#2}}}
\newcommand{\ol}[1]{\overline{#1}}
\newcommand{\dfct}{\mathrm{dfct}}
\newcommand{\ctr}[1]{{x}_{#1}}
\newcommand{\vertex}[1]{{#1}}

\usepackage{eufrak}
\usepackage{iftex}
\usepackage{mathrsfs}
\usepackage{tikz}
\usepackage{amsmath}
\usepackage{amsthm}
\usetikzlibrary{arrows}

\newtheorem{thm}{Theorem}[section]
\newtheorem{prop}[thm]{Proposition}

\newtheorem{conj}[thm]{Conjecture}

\newtheorem{defn}[thm]{Definition}
\newtheorem{alg}[thm]{Algorithm}

\ifpdf
  \usepackage{underscore}         
  \usepackage[T1]{fontenc}        
\else
  \usepackage{breakurl}           
\fi

\title{On the Impossibility of Parabolic Factorization of certain Kazhdan-Lusztig Basis Elements}
\author{Tommy Parisi
\qquad 
Mark Skandera \qquad Ben Spahiu
\qquad Jiayuan Wang
\institute{Lehigh University, Bethlehem PA, USA}
\email{tjp225@lehigh.edu \quad mas906@lehigh.edu \quad bes226@lehigh.edu \quad jiw922@lehigh.edu}
}
\def\titlerunning{Impossibility of Parabolic Factorization of certain Kazhdan-Lusztig Basis Elements}
\def\authorrunning{T. Parisi, M. Skandera, B. Spahiu, \& J. Wang}
\begin{document}
\maketitle

\begin{abstract}
For $w$ in the symmetric group $\sn$, let $\wtc wq$ be the corresponding modified, signless Kazhdan-Lusztig basis element of the type-$A$ Hecke algebra $\hnq$. An extension 
[{\em Ann.\;Comb.}\;{\bf 25}, no.\,3 (2021) pp.~757--787] of a result 
of Deodhar [{\em Geom.\;Dedicata} {\bf 36}, (1990) pp.~95--119] implies that any factorization of the form
    $\wtc wq = \frac1{f(q)} \wtc{v^{(1)}}q \cdots \wtc{v^{(r)}}q$,
with $v^{(1)},\dotsc,v^{(r)}$ maximal elements of parabolic subgroups of $\sn$ and $f(q) \in \mathbb N[q]$ depending on these, provides cancellation-free combinatorial
interpretations of the polynomials $\{P_{v,w}(q) \,|\, v \in \sn \}$ appearing
in the expansion $\sum_v P_{v,w}(q) T_v$ 
of $\wtc wq$ in terms of the natural basis $\{ T_v \,|\, v \in \sn \}$ of $\hnq$. While the set of permutations $w \in \sn$ admitting such a factorization of $\smash{\wtc wq}$ has not yet been characterized,
we apply a result of Gaetz--Gao 
[{\em Adv. Math.}
{\bf 457} (2024)
Paper No. 109941] to describe a set for which such a factorization cannot exist.
\end{abstract}

\section{Introduction}\label{sec: introduction}
The {\em Kazhdan--Lusztig polynomials} $\{ P_{v,w}(q) \,|\, v, w \in \sn \} \subset \mathbb N[q]$ are entries of the change-of-basis matrix relating a certain  {\em Kazhdan--Lusztig basis} of the Hecke algebra with another {\em natural basis.}
First appearing in the study of representations of the Hecke algebra, 
they were given existential and recursive definitions in \cite{KLRepCH}.
Appearances of the polynomials in other areas such as Lie Theory~\cite{BeilBern}, \cite{BeilBernPfofJantzen}, \cite{BryKash},
quantum groups~\cite{FKK}, 
and Schubert varieties~\cite{KLRepCH}, \cite{KLSchub} 
have inspired a search for simpler descriptions.
Ideally, such a description should interpret each coefficient of $P_{v,w}(q)$ as a set cardinality.

Some famous alternative formulas for the Kazhdan--Lusztig polynomials are due to Brenti and Deodhar. Brenti expressed $P_{v,w}(q)$ in two different ways
as simple linear combinations of recursively defined polynomials in $\mathbb Z[q]$ having both positive and negative 
coefficients~\cite[\S 3]{Brenti94}, \cite[\S 3]{Brenti97}.
Because of negative coefficients and recursive definitions, these formulas do not interpret coefficients in $P_{v,w}(q)$ as set cardinalities. Deodhar~\cite{Deodhar90} developed an algorithm which takes any reduced expression for $w$ as an input, and outputs a set $\mathscr E_{\min}$ of (not necessarily reduced) expressions
for other permutations in $\sn$. For each $v \in \sn$ and $k > 0$, the coefficient of $q^k$ in $P_{v,w}(q)$ is equal to the cardinality of a certain subset of $\mathscr E_{\min}$. On the other hand, the algorithmic component of Deodhar's method makes it difficult to apply his combinatorial interpretation in practice.


Billey and Warrington showed~\cite[Thm.\,1, Rmk.\,6]{BWHex} that 
when $w$ has certain properties, Deodhar's algorithm is trivial, 
and the output set 
$\mathscr E_{\min}$ of expressions can be replaced by a more visually appealing set of path families in a certain wiring diagram.  
Again for each $v$ and $k$, the coefficient of $q^k$ in $P_{v,w}(q)$ is equal to the cardinality of a subset of these path families. Clearwater and the second author~\cite[Cor.\,5.3]{CSkanTNNChar} then extended this result to permutations $w$ for which the Kazhdan--Lusztig basis element $\smash{\wtc wq}$ factors nicely, but did not solve the problem~\cite[Quest.\,4.5]{SkanNNDCB} of characterizing such permutations $w$. 

In Sections~\ref{s:snplanar} -- \ref{s:hnqplanar} we review basic facts about
the symmetric group, planar networks, the Hecke algebra, and the Kazhdan--Lusztig basis and polynomials.
In Section~\ref{s:defectremoval} we use the result~\cite[Cor.\,5.3]{CSkanTNNChar} to state properties of polynomials which arise in the natural expansion of products of certain Kazhdan--Lusztig basis elements of the Hecke algebra. This leads to a partial answer in Section~\ref{s:nofactor} to the characterization question~\cite[Quest.\,4.5]{SkanNNDCB}: a description of 
certain Kazhdan--Lusztig basis elements which do not factor as desired. 

\section{The symmetric group and planar networks}\label{s:snplanar}

Let $\sn$ be the symmetric group, with standard generators
$s_1,\dotsc,s_{n-1}$, length function $\ell$, and Bruhat order $\leq$.
(See, e.g., \cite{BBCoxeter} for definitions.)
Given a word $u = u_1 \cdots u_k$ in $\mfs k$,
and a word $y = y_1 \cdots y_k$ having $k$ distinct letters,
we say that {\em $y$ matches the pattern $u$} if the letters of $y$ appear
in the same relative order as those of $u$; that is, if we have
$u_i < u_j$ if and only if $y_i < y_j$ for all $i,j \in [k] \defeq \{1,\dotsc,k\}$.
On the other hand, 
say that $w \in \mfs n$ 
{\em avoids the pattern $u$} if no subword $y$
of $w$ matches the pattern $u$.
For example, the subword $423$ of the permutation $4231$ matches the pattern $312$; thus $4231$ does not avoid the pattern $312$.

It is easy to see that for 
each subinterval $[a,b] \defeq \{a,\dotsc,b\}$ of $[n]$,
the 
{\em reversal} 
\begin{equation}\label{eq:reversaldef} s_{[a,b]} \defeq 1 \cdots (a-1) b \cdots a (b+1) \cdots n \in \sn
\end{equation}
\avoidsp.
This element
is the unique longest (maximum length) element of the subgroup of $\sn$ generated by $s_a, \dotsc, s_{b-1}$.
More generally, each {\em parabolic} subgroup of $\sn$ generated by a subset 
of generators 
has longest element equal to a product of reversals on disjoint intervals.
Multiplication of reversals in $\sn$ or of related elements
\begin{equation}\label{eq:Dab}
D_{[a,b]} \defeq \ntksp 
\sum_{v \leq s_{[a,b]}}\ntksp  v
\end{equation}
in $\zsn$ can be performed graphically with certain planar networks.


Define a {\em planar network of order n} to be a directed, planar, acyclic multigraph with $2n$ boundary vertices having $n$ source vertices on the left and $n$ sink vertices on the right, both labeled $1, \dotsc, n$ from bottom to top.
We will allow edges $(\vertex x, \vertex y)$
to be marked by a positive integer multiplicity $m(x,y)$.
Let $\net n$ denote the set of such networks.
For each subinterval $[a,b]$ of $[n]$ we define a {\em simple star network}
$\smash{F_{[a,b]}} \in \net n$ by
\begin{enumerate}
    \item an interior vertex $x$ lies between the sources and sinks,
    \item for $i \in [a,b]$ we have edges from source $i$ to $x$ and from $x$ to sink $i$,
    \item for $i \notin [a,b]$ we have edges from source $i$ to sink $i$.
\end{enumerate}
For example, the simple star network $F_{\smash{[2,4]}} \in \net 4$ is
\begin{equation}
\begin{tikzpicture}[scale=.5,baseline=15]
\node at (-2.5,2.5) {$\scriptstyle{ \mathrm{source}\;4}$};
\node at (-2.5,1.5) {$\scriptstyle{ \mathrm{source}\;3}$};
\node at (-2.5,0.5) {$\scriptstyle{ \mathrm{source}\;2}$};
\node at (-2.5,-0.5) {$\scriptstyle{ \mathrm{source}\;1}$};  
\node at (2.25,2.5) {$\scriptstyle{ \mathrm{sink}\;4}$};
\node at (2.25,1.5) {$\scriptstyle{ \mathrm{sink}\;3}$};
\node at (2.25,0.5) {$\scriptstyle{ \mathrm{sink}\;2}$};
\node at (2.25,-0.5) {$\scriptstyle{ \mathrm{sink}\;1}$};  
\draw[->,>=stealth'] (-1,2.5) -- (-.2,1.65);
\draw[->,>=stealth'] (-1,1.5) -- (-.2,1.5);
\draw[->,>=stealth'] (-1,0.5) -- (-.2,1.35);
\draw[->,>=stealth'] (0,1.5) -- (.8,2.35);
\draw[->,>=stealth'] (0,1.5) -- (.8,1.5);
\draw[->,>=stealth'] (0,1.5) -- (.8,0.65);
\draw[->,>=stealth'] (-1,-0.5) -- (.8,-0.5);
\node at (-1,2.5) {$\bullet$}; 
\node at (-1,1.5) {$\bullet$}; 
\node at (-1,0.5) {$\bullet$}; 
\node at (-1,-0.5) {$\bullet$};
\node at (1,2.5) {$\bullet$}; 
\node at (1,1.5) {$\bullet$}; 
\node at (1,0.5) {$\bullet$}; 
\node at (1,-0.5) {$\bullet$};
\node at (0,1.5) {$\bullet$};
\end{tikzpicture}
\ .
\end{equation}  

For economy, we will omit edge orientations
and the words "source" and "sink" from figures.  
Thus
the seven simple star networks in $\net 4$ are
\begin{equation}\label{eq:simplestarnets}
\begin{gathered}
   \begin{tikzpicture}[scale=.6,baseline=-25]
\node at (-.4,0) {$\scriptstyle 4$};
\node at (-.4,-1) {$\scriptstyle 3$};
\node at (-.4,-2) {$\scriptstyle{2}$};
\node at (-.4,-3) {$\scriptstyle{1}$};  
\node at (1.4,0) {$\scriptstyle 4$};
\node at (1.4,-1) {$\scriptstyle 3$};
\node at (1.4,-2) {$\scriptstyle{2}$};
\node at (1.4,-3) {$\scriptstyle{1}$};  
\draw[-] (0,0) -- (1,-3);
\draw[-] (0,-1) -- (1,-2);
\draw[-] (0,-2) -- (1,-1);
\draw[-] (0,-3) -- (1,0);
\fill (0,0) circle  (1mm); \fill (0,-1) circle  (1mm); \fill (0,-2) circle  (1mm); \fill (0,-3) circle  (1mm);
\fill (1,0) circle  (1mm); \fill (1,-1) circle  (1mm); \fill (1,-2) circle  (1mm); \fill (1,-3) circle  (1mm);
\fill (.5, -1.5) circle (1mm);
\end{tikzpicture}
,\\
\phantom{\sum}\ntksp F_{[1,4]}\phantom{\sum}
\end{gathered}
\phm
\begin{gathered}
\begin{tikzpicture}[scale=.6,baseline=-25]
\node at (-.4,0) {$\scriptstyle 4$};
\node at (-.4,-1) {$\scriptstyle 3$};
\node at (-.4,-2) {$\scriptstyle{2}$};
\node at (-.4,-3) {$\scriptstyle{1}$};  
\node at (1.4,0) {$\scriptstyle 4$};
\node at (1.4,-1) {$\scriptstyle 3$};
\node at (1.4,-2) {$\scriptstyle{2}$};
\node at (1.4,-3) {$\scriptstyle{1}$}; 
\draw[-] (0,0) -- (1,-2);
\draw[-] (0,-1) -- (1,-1);
\draw[-] (0,-2) -- (1,0);
\draw[-] (0,-3) -- (1,-3);
\fill (0,0) circle  (1mm); \fill (0,-1) circle  (1mm); \fill (0,-2) circle  (1mm); \fill (0,-3) circle  (1mm);
\fill (1,0) circle  (1mm); \fill (1,-1) circle  (1mm); \fill (1,-2) circle  (1mm); \fill (1,-3) circle  (1mm);
\fill (.5, -1) circle (1mm);
\end{tikzpicture},\\
   \phantom{\sum}\ntksp F_{[2,4]}\phantom{\sum}
 \end{gathered}
\phm
\begin{gathered}
\begin{tikzpicture}[scale=.6,baseline=-25]
\node at (-.4,0) {$\scriptstyle 4$};
\node at (-.4,-1) {$\scriptstyle 3$};
\node at (-.4,-2) {$\scriptstyle{2}$};
\node at (-.4,-3) {$\scriptstyle{1}$};  
\node at (1.4,0) {$\scriptstyle 4$};
\node at (1.4,-1) {$\scriptstyle 3$};
\node at (1.4,-2) {$\scriptstyle{2}$};
\node at (1.4,-3) {$\scriptstyle{1}$}; 
\draw[-] (0,0) -- (1,0);
\draw[-] (0,-1) -- (1,-3);
\draw[-] (0,-2) -- (1,-2);
\draw[-] (0,-3) -- (1,-1);
\fill (0,0) circle  (1mm); \fill (0,-1) circle  (1mm); \fill (0,-2) circle  (1mm); \fill (0,-3) circle  (1mm);
\fill (1,0) circle  (1mm); \fill (1,-1) circle  (1mm); \fill (1,-2) circle  (1mm); \fill (1,-3) circle  (1mm);
\fill (.5, -2) circle (1mm);
\end{tikzpicture},\\
   \phantom{\sum}\ntksp F_{[1,3]}\phantom{\sum}
 \end{gathered}
\phm
\begin{gathered}
\begin{tikzpicture}[scale=.6,baseline=-25]
\node at (-.4,0) {$\scriptstyle 4$};
\node at (-.4,-1) {$\scriptstyle 3$};
\node at (-.4,-2) {$\scriptstyle{2}$};
\node at (-.4,-3) {$\scriptstyle{1}$};  
\node at (1.4,0) {$\scriptstyle 4$};
\node at (1.4,-1) {$\scriptstyle 3$};
\node at (1.4,-2) {$\scriptstyle{2}$};
\node at (1.4,-3) {$\scriptstyle{1}$}; 
\draw[-] (0,0) -- (1,-1);
\draw[-] (0,-1) -- (1,0);
\draw[-] (0,-2) -- (1,-2);
\draw[-] (0,-3) -- (1,-3);
\fill (0,0) circle  (1mm); \fill (0,-1) circle  (1mm); \fill (0,-2) circle  (1mm); \fill (0,-3) circle  (1mm);
\fill (1,0) circle  (1mm); \fill (1,-1) circle  (1mm); \fill (1,-2) circle  (1mm); \fill (1,-3) circle  (1mm);
\fill (.5, -0.5) circle (1mm);
\end{tikzpicture},\\
   \phantom{\sum}\ntksp F_{[3,4]}\phantom{\sum}
 \end{gathered}
\phm
\begin{gathered}
\begin{tikzpicture}[scale=.6,baseline=-25]
\node at (-.4,0) {$\scriptstyle 4$};
\node at (-.4,-1) {$\scriptstyle 3$};
\node at (-.4,-2) {$\scriptstyle{2}$};
\node at (-.4,-3) {$\scriptstyle{1}$};  
\node at (1.4,0) {$\scriptstyle 4$};
\node at (1.4,-1) {$\scriptstyle 3$};
\node at (1.4,-2) {$\scriptstyle{2}$};
\node at (1.4,-3) {$\scriptstyle{1}$}; 
\draw[-] (0,0) -- (1,0);
\draw[-] (0,-1) -- (1,-2);
\draw[-] (0,-2) -- (1,-1);
\draw[-] (0,-3) -- (1,-3);
\fill (0,0) circle  (1mm); \fill (0,-1) circle  (1mm); \fill (0,-2) circle  (1mm); \fill (0,-3) circle  (1mm);
\fill (1,0) circle  (1mm); \fill (1,-1) circle  (1mm); \fill (1,-2) circle  (1mm); \fill (1,-3) circle  (1mm);
\fill (.5, -1.5) circle (1mm);
\end{tikzpicture},\\
   \phantom{\sum}\ntksp F_{[2,3]}\phantom{\sum}
 \end{gathered}
\phm
\begin{gathered}
\begin{tikzpicture}[scale=.6,baseline=-25]
\node at (-.4,0) {$\scriptstyle 4$};
\node at (-.4,-1) {$\scriptstyle 3$};
\node at (-.4,-2) {$\scriptstyle{2}$};
\node at (-.4,-3) {$\scriptstyle{1}$};  
\node at (1.4,0) {$\scriptstyle 4$};
\node at (1.4,-1) {$\scriptstyle 3$};
\node at (1.4,-2) {$\scriptstyle{2}$};
\node at (1.4,-3) {$\scriptstyle{1}$}; 
\draw[-] (0,0) -- (1,0);
\draw[-] (0,-1) -- (1,-1);
\draw[-] (0,-2) -- (1,-3);
\draw[-] (0,-3) -- (1,-2);
\fill (0,0) circle  (1mm); \fill (0,-1) circle  (1mm); \fill (0,-2) circle  (1mm); \fill (0,-3) circle  (1mm);
\fill (1,0) circle  (1mm); \fill (1,-1) circle  (1mm); \fill (1,-2) circle  (1mm); \fill (1,-3) circle  (1mm);
\fill (.5, -2.5) circle (1mm);
\end{tikzpicture},\\
   \phantom{\sum}\ntksp F_{[1,2]}\phantom{\sum}
 \end{gathered}
\
\phm
\begin{gathered}
\begin{tikzpicture}[scale=.6,baseline=-25]
\node at (-.4,0) {$\scriptstyle 4$};
\node at (-.4,-1) {$\scriptstyle 3$};
\node at (-.4,-2) {$\scriptstyle{2}$};
\node at (-.4,-3) {$\scriptstyle{1}$};  
\node at (1.4,0) {$\scriptstyle 4$};
\node at (1.4,-1) {$\scriptstyle 3$};
\node at (1.4,-2) {$\scriptstyle{2}$};
\node at (1.4,-3) {$\scriptstyle{1}$}; 
\draw[-] (0,0) -- (1,0);
\draw[-] (0,-1) -- (1,-1);
\draw[-] (0,-2) -- (1,-2);
\draw[-] (0,-3) -- (1,-3);
\fill (0,0) circle  (1mm); \fill (0,-1) circle  (1mm); \fill (0,-2) circle  (1mm); \fill (0,-3) circle  (1mm);
\fill (1,0) circle  (1mm); \fill (1,-1) circle  (1mm); \fill (1,-2) circle  (1mm); \fill (1,-3) circle  (1mm);
\end{tikzpicture}.\\
\phantom{\sum}\ntksp F_\emptyset \phantom{\sum}
 \end{gathered}
\end{equation}

Given networks $E, F \in \net{n}$,
in which all sources have
outdegree $1$ and all sinks have indegree $1$, define
the concatenation $E \circ F$ of $E$ and $F$
as follows.  For $i = 1,\dotsc, n$, do
\begin{enumerate}
\item remove sink $i$ of $E$ and source $i$ of $F$,
\item merge each edge $(\vertex x, \text{sink }i)$ in $E$ with each edge
  $( \text{source }i, \vertex y )$ in $F$
  to form a single edge $(\vertex x, \vertex y)$ in $E \circ F$.
\end{enumerate}
Thus a concatenation of the form
$F_{[a_1,b_1]} \circ \cdots \circ F_{[a_m,b_m]} \in \net n$
has $2n + m$ edges:
$n$ sources inherited from $F_{[a_1,b_1]}$, $n$ sinks inherited from $F_{[a_m,b_m]}$, and $m$ internal vertices 
$x_1,\dotsc, x_m$, where $x_j$ is inherited 
from $F_{[a_j,b_j]}$.
Sometimes in a concatenation $E \circ F$,
there may exist internal vertices $\vertex x$ in $E$, $\vertex y$ in $F$
with
$m(\vertex x, \vertex y) > 1$ multiplicity-$1$ edges
incident upon both.
Define the {\em condensed concatenation}
$E \bullet F$ to be the
subdigraph of $E \circ F$ obtained
by removing, for all such pairs $(\vertex x, \vertex y)$,
all but one of the $m(\vertex x, \vertex y)$ edges incident upon both,
and by marking this edge with the multiplicity $m(\vertex x, \vertex y)$.
For example, in $\net 4$ we have the 
graphs
\begin{equation}\label{eq:allbutone2}
  F_{[1,3]} \circ F_{[2,4]} \circ F_{[1,3]} =
\begin{tikzpicture}[scale=.6,baseline=0]
\node at (-.4,1.5) {$\scriptstyle 4$};
\node at (-.4,0.5) {$\scriptstyle 3$};
\node at (-.4,-0.5) {$\scriptstyle{2}$};
\node at (-.4,-1.5) {$\scriptstyle{1}$};  
\node at (3.4,1.5) {$\scriptstyle 4$};
\node at (3.4,0.5) {$\scriptstyle 3$};
\node at (3.4,-0.5) {$\scriptstyle{2}$};
\node at (3.4,-1.5) {$\scriptstyle{1}$};  
\draw[-] (0,1.5) -- (1,1.5) -- (2,-0.5) -- (3,-0.5);
\draw[-] (0,0.5) -- (1,-1.5) -- (2,-1.5) -- (3,0.5);
\draw[-] (0,-0.5) -- (1,-0.5) -- (2,1.5) -- (3,1.5);
\draw[-] (0,-1.5) -- (1,0.5) -- (2,0.5) -- (3,-1.5);
\fill (0,1.5) circle  (1mm); \fill (0,.5) circle  (1mm); \fill (0,-.5) circle  (1mm); \fill (0,-1.5) circle  (1mm);
\fill (3,1.5) circle  (1mm); \fill (3,.5) circle  (1mm); \fill (3,-.5) circle  (1mm); \fill (3,-1.5) circle  (1mm);
\fill (.5, -.5) circle (1mm); \fill (1.5, .5) circle (1mm); \fill (2.5, -.5) circle (1mm);
\end{tikzpicture},
\qquad
  F_{[1,3]} \bullet F_{[2,4]} \bullet F_{[1,3]} =
\begin{tikzpicture}[scale=.6,baseline=0]
\node at (-0.9,1.5) {$\scriptstyle 4$};
\node at (-0.9,0.5) {$\scriptstyle 3$};
\node at (-0.9,-0.5) {$\scriptstyle{2}$};
\node at (-0.9,-1.5) {$\scriptstyle{1}$};  
\node at (2.05,-.25) {$\scriptstyle _{(2)}$};
\node at (.95,-.25) {$\scriptstyle _{(2)}$};
\node at (3.9,1.5) {$\scriptstyle 4$};
\node at (3.9,0.5) {$\scriptstyle 3$};
\node at (3.9,-0.5) {$\scriptstyle{2}$};
\node at (3.9,-1.5) {$\scriptstyle{1}$};  
\draw[-] (-0.5,1.5) -- (.7,1.5) -- (1.5,0.5) -- (2.3,1.5) -- (3.5,1.5);
\draw[-] (-0.5,0.5) -- (-0,-0.5) -- (1.5,0.5) -- (3,-0.5) -- (3.5,0.5);
\draw[-] (-0.5,-0.5) -- (-0,-0.5) -- (1.5,0.5) -- (3,-0.5) -- (3.5,-0.5);
\draw[-] (-0.5,-1.5) -- (-0,-0.5) -- (.5,-1.5) -- (2.5,-1.5) -- (3,-0.5) -- (3.5,-1.5);
\fill (-.5,1.5) circle  (1mm); \fill (-.5,.5) circle  (1mm); \fill (-.5,-.5) circle  (1mm); \fill (-.5,-1.5) circle  (1mm);
\fill (3.5,1.5) circle  (1mm); \fill (3.5,.5) circle  (1mm); \fill (3.5,-.5) circle  (1mm); \fill (3.5,-1.5) circle  (1mm);
\fill (0, -.5) circle (1mm); 
\fill (1.5, .5) circle (1mm); \fill (3, -.5) circle (1mm);
\node at (0.15, 0.1) {$\scriptstyle x_1$};
\node at (1.5, 1.1) {$\scriptstyle x_2$};
\node at (2.8, 0.1) {$\scriptstyle x_3$};
\end{tikzpicture},
\end{equation}
in which the two multiplicity-$2$ edges $(x_1,x_2)$,
$(x_2,x_3)$ of 
$F_{[1,3]}\bullet F_{[2,4]} \bullet F_{[1,3]}$
are the remnants of pairs of edges incident upon the same 
internal vertices in $F_{[1,3]} \circ F_{[2,4]} \circ F_{[1,3]}$.

Define a {\em star network} to be an element of $\net n$ constructed by
concatenation or condensed concatenation of simple star networks.
Let $\cnet n$ denote the subset of $\net n$ consisting of condensed concatenations
of finitely many simple star networks. 
Call a sequence $\pi = (\pi_1,\dotsc,\pi_n)$
of source-to-sink paths in a star network $F \in \cnet n$
a {\em path family of type $v = v_1 \cdots v_n \in \sn$}
 if for all $i$, path $\pi_i$ begins at source $i$ and terminates at sink $v_i$.  Say that $\pi$ {\em covers} $F$ if each edge $(x_i,x_j)$ of $F$ belongs to $m(x_i,x_j)$ of the paths in $\pi$, and define the sets
\begin{equation}\label{eq:pathfams}
  \begin{gathered}
    \Pi(F) = \{ \pi \,|\, \pi \text{ a path family covering } F \},\\
    \Pi_v(F) = \{ \pi \in \Pi(F) \,|\, \type(\pi) = v \}.
    \end{gathered}
\end{equation}

In terms of the definitions \eqref{eq:pathfams}, we may combinatorially interpret products of 
elements (\ref{eq:Dab}) quite simply.  
We say that $F$ {\em graphically represents}
\begin{equation}\label{eq:PiwF}
    \sum_{v \in \sn} 
    \ntksp |\Pi_v(F)|\,v
\end{equation}
{\em as an element of $\zsn$}.
For
$F = F_{[a_1,b_1]} \circ \cdots \circ F_{[a_m,b_m]}$,
this element is
$D_{[a_1,b_1]} \cdots D_{[a_m,b_m]}$;
for
$F = F_{[a_1,b_1]} \bullet \cdots \bullet F_{[a_m,b_m]}$,
it is
$D_{[a_1,b_1]} \cdots D_{[a_m,b_m]}$
divided by the product, over all edges $(x_i,x_j)$,  of the numbers $m(x_i,x_j)!$.

\section{The Hecke algebra and planar networks}\label{s:hnqplanar}

Define the {\em (type-$A$ Iwahori--) Hecke algebra} $\hnq$
to be the
$\zqq$-span of its {\em natural basis} $\{ T_w \,|\, w \in \sn \}$,
with multiplication given by $T_e T_w = T_w$ where $e = 1 \cdots n$ is the identity element of $\sn$, and
\begin{equation*}
    T_{s_i} T_w = \begin{cases}
        T_{s_i w} &\text{if $s_i w > w$},\\
        (q-1)T_{s_i w} + q T_w &\text{if $s_i w < w$}.
        \end{cases}
\end{equation*}
Specializing at $\qp12 = 1$ we have $H_n(1) \cong \zsn$
with $T_w \mapsto w$.

A semilinear involution on $\hnq$, known as the {\em bar involution}, is defined by
\begin{equation*}
      \ol{\qp12} = \qm12, \qquad \ol{T_w} = (T_{w^{-1}})^{-1}, \qquad
\overline{\sum_{w \in \sn} B_w(q) T_w}
  = \sum_{w \in \sn} \ol{B_w(q)}\, \ol{T_w}.
\end{equation*}
Kazhdan and Lusztig showed~\cite{KLRepCH} that $\hnq$ has a unique
basis $\{ C'_w \,|\, w \in \sn \}$
satisfying 
$\ol{C'_w} = C'_w$ 
and
\begin{equation}\label{eq:klbasis}
\qp{\ell(w)}2 C'_w = \sum_{v \leq w} P_{v,w}(q) T_w,
\end{equation}
where coefficients $P_{v,w}(q) \in \mathbb Z[q]$
satisfy
$\deg(P_{v,w}(q)) < \frac{\ell(w) - \ell(v)-1}2$ for $v < w$, 
and
$P_{w,w}(q) = 1$ for all $w$.
It is known that these 
{\em Kazhdan--Lusztig polynomials} 
satisfy $P_{v,w}(q) \in \mathbb N[q]$, and $P_{v,w}(0) = 1$ for $v \leq w$.  
We also have~\cite{LakSan} that if
$w$ \avoidsp,
then $P_{v,w}(q) = 1$ for all $v \leq w$.
For convenience, we define 
\begin{equation}\label{eq: change of basis}
\wtc wq \defeq \qp{\ell(w)}2 C'_w.
\end{equation}
Kazhdan--Lusztig basis elements and their products
appear in various settings, including
intersection homology~\cite{BBDFaisceaux}, \cite{SpringerQACI}, algorithmic and combinatorial description of Kazhdan--Lusztig basis elements themselves~\cite{BWHex}, \cite{Deodhar90}, Schubert varieties~\cite{BWHex}, total nonnegativity~\cite{GJImm}, \cite{SkanNNDCB}, \cite{StemImm}, \cite{StemConj}, 
trace evaluations~\cite{CHSSkanEKL}, \cite{CSkanTNNChar}, \cite{GreeneImm}, \cite{KLSBasesQMBIndSgn}, \cite{SkanHyperGC}, and chromatic symmetric functions~\cite{CHSSkanEKL}, \cite{SkanHyperGC}.

Deodhar~\cite[Prop.\,3.5]{Deodhar90}
studied sequences
$(s_{i_1},\dotsc,s_{i_k})$
of generators of $\sn$, 
products of the 
corresponding 
Kazhdan--Lusztig basis elements 
$\wtc{\smash{s_{i_j}}}q = T_e + T_{\smash{s_{i_j}}}$ of $\hnq$, 
and their natural expansions
\begin{equation}\label{eq:Deoprod}
\wtc{s_{i_1}\ntksp}q \cdots \wtc{s_{i_m}\ntksp}q
= \sum_{v \in \sn} A_v(q) T_v.
\end{equation}
He 
described the 
coefficients $\{ A_v(q) \,|\, v \in \sn \} \subset \mathbb Z[q]$ 
in terms of {\em subexpressions}
of $(s_{i_1}, \dotsc, s_{i_m})$,
sequences $\sigma = (
\sigma_1, \dotsc, \sigma_m)$ 
with $\sigma_j \in \{ e, s_{i_j}\}$
for $j = 1,\dotsc, m$.
(Our treatment here differs slightly from that of \cite{Deodhar90} but is equivalent.) Call index $j$ a {\em defect} of $\sigma$ if 
\begin{equation}\label{eq:deodhardefect}
\sigma_1 \cdots \sigma_{j-1}s_{i_j} < \sigma_1 \cdots \sigma_{j-1}
\end{equation}
and let $\dfct(\sigma)$ denote the number of defects of $\sigma$. 
(Observe that $j=1$ cannot be a defect: we have $s_{i_1} > e$ always.) 
Each coefficient on the right-hand side of (\ref{eq:Deoprod}) is given by
\begin{equation}\label{eq:Deodefectformula}
    A_v(q) = \sum_\sigma q^{\dfct(\sigma)},
\end{equation}
where the sum is over all subexpressions
$\sigma$ of 
$(s_{i_1},\dotsc, s_{i_m})$
satisfying $\sigma_1 \cdots \sigma_m = v$.


Billey and Warrington observed~\cite[Rmk.\,6]{BWHex}
that 
the defect statistic has a simple graphical interpretation.
Specifically, subexpressions 
of $(s_{i_1},\dotsc,s_{i_m})$ correspond bijectively to path families
covering 
\begin{equation}\label{eq:wirediagprod}
F =
F_{[i_1,i_1+1]} \bullet \cdots \bullet F_{[i_m,i_m+1]} 
\end{equation}
in $\cnet n$ with 
$(\sigma_1,\dotsc,\sigma_m)$ corresponding 
to the family $\pi \in \Pi(F)$
constructed by prescribing 
\begin{equation*}
\text{the paths meeting at $\ctr j$ }
\begin{cases}
    \text{cross there} &\text{if $\sigma_j = s_{i_j}$},\\
    \text{do not cross there} &\text{if $\sigma_j = e$}.
    \end{cases}
\end{equation*}
By this bijection, index $j$ is a defect of $\sigma$ in the sense of (\ref{eq:deodhardefect}) if and only if the paths meeting at $\ctr j$
have previously crossed an odd number of times.

Clearwater--Skandera 
extended this 
result~\cite[Cor.\,5.3]{CSkanTNNChar} to products of the form
\begin{equation}\label{eq:CSprod}
\wtc{s_{[a_1,b_1]}}q \cdots \wtc{s_{[a_m,b_m]}}q
= \sum_{v \in \sn} A_v(q) T_v,
\quad \text{where} \quad
    \wtc{s_{[a_j,b_j]}}q = \nTksp \sum_{u \leq s_{[a_j,b_j]}} \nTksp T_u,
\end{equation}
since reversals \avoidp.
This extension requires a more general definition of defects.
While the intersection of two paths in 
\eqref{eq:wirediagprod} is a union of vertices,
the intersection of two paths in
\begin{equation}\label{eq:bulletconcat}
F = F_{[a_1,b_1]} \bullet \cdots \bullet F_{[a_m,b_m]}
\end{equation}
is a subgraph of $F$ whose connected components are vertices or paths of the form 
\begin{equation}\label{eq:component}
(x_k,\dotsc, x_\ell)
\end{equation}
for some $k < \ell$.  
For each initial vertex $x_k$ in a component \eqref{eq:component} of the intersection of two paths, we will say that the paths {\em meet} at $x_k$.
Our embedding of star networks
in the plane naturally allows us to declare an edge entering (exiting) a vertex $x_k$ to be above or below another edge entering (exiting) $x_k$.
We will call a component \eqref{eq:component} in the intersection of paths $\pi_i$, $\pi_j$,
a {\em crossing} of $\pi_i$ and $\pi_j$ if the two paths enter $x_k$ and exit $x_\ell$ in different orders.
Extending the Billey--Warrington definition of defect to accommodate three or more paths passing through a vertex, we have the following~\cite[\S 5]{CSkanTNNChar}.

\begin{defn}\label{d: defect}
    Given a path family $\pi$ covering $F = F_{[a_1,b_1]} \bullet \cdots \bullet F_{[a_m,b_m]}$,
define a {\em defect} of $\pi$ at 
$\ctr k$ to be
a triple $(\pi_i, \pi_j, k)$ with
$i < j$
and $\pi_i$ and $\pi_j$ meeting at $\ctr k$ after having crossed an odd number of times. 
Define $\dfct(\pi)$ to be the number of defects of $\pi$.

\end{defn}

Extending the interpretation \eqref{eq:PiwF} of a planar network, we define the set
\begin{equation}\label{eq:Pi with v and d}
    \Pi_{v,d}(F) = \{ \pi \in \Pi_v(F) \,|\, \dfct(\pi) = d \}
\end{equation}
and we
say that $F$ {\em graphically
represents the element}
\begin{equation}\label{eq:Gtozhnq}
    \sum_{v \in \sn} \sum_{d \geq 0} |\Pi_{v,d}(F)|q^d T_v 
    = \nTksp\sum_{\pi \in \Pi(F)}\nTksp q^{\dfct(\pi)}T_{\type(\pi)}
\end{equation}
{\em as an element of $\hnq$}.
For $F = F_{[a_1,b_1]} \circ \cdots \circ F_{[a_m,b_m]}$, this element is
$\wtc{s_{[a_1,b_1]}}q \cdots \wtc{s_{[a_m,b_m]}}q$;
for $F = F_{[a_1,b_1]} \bullet \cdots \bullet F_{[a_m,b_m]}$, it is
$\wtc{s_{[a_1,b_1]}}q \cdots \wtc{s_{[a_m,b_m]}}q$, divided by
the product, over all edges $(x_i,x_j)$, of the
$q$-factorial polynomials $m(x_i,x_j)_q!$.  (See, e.g., \cite{StanEC1}.)

The denominator above can also be expressed in terms of intervals appearing in reversals.
Given a sequence of $m$ intervals
 \begin{equation}\label{eq:intervalseq}
 \mathcal A = ([a_1,b_1], \dotsc, [a_m,b_m]),
 \end{equation}
 define $\tbinom m2$ more intervals $\{I_{i,j} \,|\, i < j \}$ by
 \begin{equation}\label{eq:intervaloverlap}
 I_{i,j} = [a_i,b_i] \cap [a_j,b_j] \ssm ( [a_{i+1},b_{i+1}] \cup \cdots \cup [a_{j-1},b_{j-1}]).
 \end{equation}
Let $f_{\ntnsp\mathcal A}(q)$ be the product of the $q$-factorials of the cardinalities of the intervals (\ref{eq:intervaloverlap}),
 \begin{equation}\label{eq:overlappoly}
 f_{\ntnsp\mathcal A}(q) \defeq \prod_{i < j} |I_{i,j}|_q!.
 \end{equation}
Say that a Kazhdan--Lusztig basis element $\wtc wq$ has a {\em parabolic factorization} 
if there is a sequence (\ref{eq:intervalseq}) of intervals satisfying
\begin{equation}\label{eq:reversalfactor}
\wtc wq =
\frac1{f_{\ntnsp \mathcal A}(q)} \wtc{s_{[a_1,b_1]}}q \cdots \wtc{s_{[a_m,b_m]}}q.
\end{equation}
Two results on parabolic factorization are the following~\cite[Thm.\,1]{BWHex}, \cite[Thm.\,4.3]{SkanNNDCB}.
\begin{thm}\label{t:BWHex}
If $w \in \sn$ avoids the patterns $321$, $56781234$, $46781235$, $56718234$, $46718235$,
and if $s_{i_1} \cdots s_{i_\ell}$ is a reduced expression for $w$, then we have
$\wtc wq = \wtc {s_{i_1}}q \cdots \wtc {s_{i_\ell}}q$.
\end{thm}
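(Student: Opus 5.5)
The plan is to use the uniqueness characterization of the Kazhdan--Lusztig basis recalled in \eqref{eq:klbasis}. Put $X = \wtc{s_{i_1}}q \cdots \wtc{s_{i_\ell}}q$. Since $\wtc{s}q = \qp12 C'_s$ and each $C'_s$ is bar-invariant, $\qm{\ell}2 X = C'_{s_{i_1}} \cdots C'_{s_{i_\ell}}$ is bar-invariant. It therefore suffices to prove two things about the coefficients $A_v(q)$ in \eqref{eq:Deoprod}. First, $A_w(q) = 1$. Second, for $v \neq w$ we have $A_v \neq 0$ only if $v < w$, and then $\deg A_v(q) < \frac{\ell(w)-\ell(v)}2$. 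Together these mean that $\qm\ell2 X$ equals $T_w$ times $\qm\ell2$ plus terms with coefficients in $q^{-\ell/2}\cdot$(strictly smaller degree). Uniqueness then forces $\qm\ell2 X = C'_w$, i.e. $X = \wtc wq$.

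All coefficients will be computed with Deodhar's formula \eqref{eq:Deodefectformula}, using the Billey--Warrington wiring-diagram picture \eqref{eq:wirediagprod}. Take a subexpression $\sigma$ and classify its positions $j$ into four kinds:
\begin{itemize}
\item $U$: $\sigma_j = s_{i_j}$ and $j$ is not a defect;
\item $D$: $\sigma_j = s_{i_j}$ and $j$ is a defect;
\item $E_0$: $\sigma_j = e$ and $j$ is not a defect;
\item $E_1$: $\sigma_j = e$ and $j$ is a defect.
\end{itemize}
The length changes by $+1$ at $U$, by $-1$ at $D$, and by $0$ otherwise. Hence $\ell(v) = |U| - |D|$ and $\ell(w) - \ell(v) = 2|D| + |E_0| + |E_1|$, while $\dfct(\sigma) = |D| + |E_1|$. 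The degree bound therefore reduces to the combinatorial inequality
\[
|E_1| < |E_0| \quad \text{for every subexpression } \sigma \text{ other than the full one}.
\]
The full subexpression has no defects because the expression is reduced, and it is the only subexpression with product $w$. So the main task is this inequality. Note also that $v \le w$ holds automatically by the subword property.

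To prove the inequality I would use that $w$ is $321$-avoiding, hence fully commutative. In the reduced wiring diagram, any two wires meet at most once. Consider a defect position $j \in E_1 \cup D$, where wires $a$ and $b$ meet after having crossed an odd number of times. Their earlier crossing occurs at some position $k$ where they meet in $\sigma$. In $w$'s reduced diagram they meet only once, so the pair $(a,b)$ in $\sigma$ must have been routed to position $k$ through choices of $e$. I would try to build an injection $\phi\colon E_1 \to E_0$ by assigning to each defect $e$ the most recent preceding non-defect $e$ that caused the two wires to become adjacent/crossed. I would formalize this in Stembridge's heap of $w$: heap elements labelled $e$ are "uncrossings", and $\phi$ walks down the heap to a specific uncovered element. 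I would then show $\phi$ is injective and misses at least one element of $E_0$, which gives the strict inequality. The strictness should come from considering the first $e$ in $\sigma$: it is necessarily in $E_0$, and it can be shown not to lie in the image of $\phi$.

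The main obstacle is injectivity of $\phi$, and this is exactly where avoidance of the four hexagon patterns $56781234, \dots, 46718235$ must enter. When two distinct defects map to the same $E_0$ element, the wires involved should trace out a hexagonal configuration in the heap, that is, a $6$-cycle of crossings bounding a region. Such a configuration forces one of the four forbidden patterns to occur in $w$. My first attempt would be a minimal-counterexample argument: take the earliest collision of $\phi$, and read off from the heap the eight wires bounding the hexagon, showing they realize one of the listed patterns.
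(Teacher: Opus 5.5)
\textbf{There is a genuine gap: you reduce the theorem correctly to an inequality but do not prove that inequality.}

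\emph{What is correct.} Your opening steps are right:
\begin{itemize}
\item $C'_{s_{i_1}}\cdots C'_{s_{i_\ell}}$ is bar-invariant, and its coefficients are given by Deodhar's formula.
\item The full subexpression is the only one with product $w$.
\item The bookkeeping $\ell(w)-\ell(v)=2|D|+|E_0|+|E_1|$ and $\dfct(\sigma)=|D|+|E_1|$ is correct.
\end{itemize}
Together these reduce the theorem to the inequality $|E_1|<|E_0|$ for every non-full $\sigma$. Since $A_v$ has no cancellation, this inequality is in fact equivalent to the theorem. This is Deodhar's boundedness criterion for the set of all subexpressions, which is the framework Billey--Warrington work in. The paper itself gives no proof here; it only quotes \cite[Thm.\,1]{BWHex}. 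You also use the correct bound $\deg A_v<\frac{\ell(w)-\ell(v)}2$; the version printed in \eqref{eq:klbasis}, with $\ell(w)-\ell(v)-1$ and a strict inequality, is a misprint.

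\emph{Why the inequality is the whole difficulty.} It is the entire content of the theorem. By the equivalence above and the converse direction in Billey--Warrington, it is \emph{false} for $321$-avoiding permutations containing a hexagon, such as $46718235$. So any proof must use hexagon-avoidance in an essential way, and yours does not yet.

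\emph{The map $\phi$ is never defined.} Paths of $\sigma$ cross only at positions in $U\cup D$, never at an $e$. So ``the non-defect $e$ that caused the two wires to become adjacent/crossed'' does not refer to anything. Take $w=3412$ with reduced word $(s_2,s_1,s_3,s_2)$ and $\sigma=(s_2,e,e,e)$.
\begin{itemize}
\item Then $E_1=\{4\}$ and $E_0=\{2,3\}$.
\item The defect at $x_4$ is between $\pi_2$ and $\pi_3$, which crossed at $x_1$.
\item The $e$'s at $x_2$ and $x_3$ are meetings of the \emph{other} pairs $(\pi_1,\pi_3)$ and $(\pi_2,\pi_4$). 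They merely kept $\pi_2,\pi_3$ from separating.
\end{itemize}
Both positions are candidates. Your rule must pick $x_3$ if the first $e$ is to stay outside the image. You need a canonical choice, plus a proof that it always lands in $E_0$.

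\emph{Injectivity and the hexagon link are not argued.} Injectivity, non-surjectivity, and the claim that a collision of $\phi$ forces one of $56781234,\dots,46718235$ are stated only as expectations. Two further points are off:
\begin{itemize}
\item ``Any two wires meet at most once'' holds for every reduced word, so it uses nothing about $321$-avoidance.
\item That statement concerns the wires of $w$'s diagram, not the paths of $\sigma$. So ``the pair must have been routed there through choices of $e$'' does not follow.
\end{itemize}

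\emph{What $321$-avoidance actually supplies.} It gives heap structure. Between consecutive occurrences $x_k,x_j$ of $s_i$ there is exactly one $s_{i-1}$ and exactly one $s_{i+1}$. Hence each path entering $x_j$ is one of two things:
\begin{itemize}
\item the path that left $x_k$ at that height, if $\sigma$ is $e$ at the intervening $s_{i\pm1}$;
\item the path swapped in from height $i-1$ or $i+2$, otherwise.
\end{itemize}

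\emph{What the missing lemma must do.} Using this structure, it must:
\begin{itemize}
\item charge each $E_1$ position to an $E_0$ position in such a local window;
\item show the charging is injective and misses the first $e$;
\item show that any failure of the charging produces a hexagon in the heap of $w$.
\end{itemize}
Without that combinatorial lemma you have the standard reduction, not the theorem.
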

\begin{thm}\label{t:SkanSmooth}
    If $w \in \sn$ \avoidsp, then there exists a sequence \eqref{eq:intervalseq} of intervals such that
    we have the factorization \eqref{eq:reversalfactor}.
\end{thm}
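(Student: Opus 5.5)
The plan is induction on $n$, peeling off one reversal at a time by using the structure of smooth permutations. The case $n=1$ is trivial. For $w\in\sn$ avoiding $3412$ and $4231$, Lakshmibai--Sandhya~\cite{LakSan} gives $P_{v,w}(q)=1$ for all $v\le w$, so
\begin{equation*}
\wtc wq=\sum_{v\le w}T_v .
\end{equation*}
The task is therefore to show that the Bruhat interval $[e,w]$, weighted in the Hecke algebra, is a normalized product of full parabolic sums.

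The first step is a Gasharov-type structural lemma. Let $w_d=n$. Either the letters $w_d>w_{d+1}>\cdots>w_n$ are decreasing, or the analogous statement holds for $w^{-1}$. Replacing $w$ by $w^{-1}$ if necessary, which reverses the factorization sequence, we may assume the first case. Set $J=[d,n]$ and $u=w\,s_J$. Then $u(n)=n$, so we may view $u$ as an element of $\mfs{n-1}$. Pattern avoidance is inherited by $u$, so induction provides intervals $\mathcal A'$ with
\begin{equation*}
\wtc uq=\frac{1}{f_{\mathcal A'}(q)}\,\wtc{s_{[a_1,b_1]}}q\cdots\wtc{s_{[a_{m-1},b_{m-1}]}}q .
\end{equation*}
We then take $\mathcal A=(\mathcal A',[d,n])$. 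The new factors of $f_{\mathcal A}(q)$ are the $|I_{i,m}|_q!$ for $i<m$. The intervals $I_{i,m}$ are disjoint, and their union is the part of $J$ covered by earlier intervals, with each point attributed to the last such interval. Hence $\prod_{i<m}|I_{i,m}|_q!$ is the Poincar\'e polynomial $P_K(q)$ of a parabolic subgroup $W_K\subseteq W_J$, where $K$ is the union of these blocks.

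The second step is the Hecke-algebra identity
\begin{equation*}
\wtc uq\,\wtc{s_J}q=P_K(q)\,\wtc wq .
\end{equation*}
I will argue as follows.
\begin{itemize}
\item[(a)] From the chosen factorization of $u$, show that the right descent set of $u$ contains $K$. This holds because the last interval covering each block of $K$ ends the product, up to commuting past disjoint later intervals. Consequently $[e,u]$ is a union of left cosets $xW_K$, and
\begin{equation*}
\wtc uq=\Big(\sum_{x\in W^K,\;x\le u}T_x\Big)\wtc{w_0(K)}q .
\end{equation*}
\item[(b)] Use the standard fact that $\wtc{w_0(K)}q\,\wtc{w_0(J)}q=P_K(q)\,\wtc{w_0(J)}q$ for $K\subseteq J$. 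This follows since $T_{s}\wtc{w_0(J)}q=q\,\wtc{w_0(J)}q$ for $s\in J$.
\item[(c)] Show that $\sum_{x}T_x\sum_{y\in W_J}T_y=\sum_{v\le w}T_v$, where $x$ runs over $x\in W^K$ with $x\le u$.
\end{itemize}

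Part (c) is the main obstacle. We need every $T_xT_y$ to expand with no $q$-corrections, and the products $xy$ to cover $[e,w]$ exactly once. For the first requirement, I would refine the coset decomposition: $x$ should be replaced by its minimal representative in $xW_J$. The point to prove is that $\{x\in W^K: x\le u\}$ is the set of minimal $W_J$-coset representatives in $[e,w]$, each counted once, so that lengths add and $T_xT_y=T_{xy}$. For the second requirement, I would use the decreasing tail $w_d>\cdots>w_n$ to show that $w$ is maximal in its coset $wW_J$. Together with the lifting property of Bruhat order, this should give $[e,w]=\{xy: x\le w^J,\ x\in W^J,\ y\in W_J\}$. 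The delicate part is verifying that $W^K$-representatives below $u$ correspond bijectively to $W^J$-representatives below $w^J$. Here I expect to use $3412$/$4231$-avoidance once more, via the tableau criterion for Bruhat order restricted to the positions $d,\dots,n$.
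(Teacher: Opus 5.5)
The paper does not prove this statement; it quotes it from \cite{SkanNNDCB}. So I am judging your argument on its own terms. Your architecture (a)--(c) is reasonable, but there is a genuine gap at the choice $u=w\,s_J$, and the step fails in two ways.

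\emph{First failure: $u$ need not avoid the patterns.} Then the induction cannot be invoked at all. For example, $w=34521$ avoids $3412$ and $4231$ and has $w_3=5>2>1$, but $u=w\,s_{[3,5]}=34125$ contains $3412$.

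\emph{Second failure: the key identity is false.} Even when $u$ is smooth, $\wtc{u}{q}\wtc{s_J}{q}=P_K(q)\wtc{w}{q}$ fails for every scalar. Take $w=3421$, so $d=2$, $J=[2,4]$ and $u=3124=s_2s_1$. Here $[e,u]=\{e,s_1,s_2,s_2s_1\}$, with $s_1,s_2s_1\in W^J$ and $s_2\in W_J$, and $T_{s_2}\wtc{s_J}{q}=q\,\wtc{s_J}{q}$. Also $[e,3421]=\{v: v_1\neq 4\}$ is the union of the cosets $xW_J$ for $x\in\{1234,2134,3124\}$. Hence
\begin{equation*}
\wtc{3124}{q}\,\wtc{s_{[2,4]}}{q}=\bigl((1+q)T_e+T_{2134}+T_{3124}\bigr)\wtc{s_{[2,4]}}{q}=\wtc{3421}{q}+q\,\wtc{s_{[2,4]}}{q}.
\end{equation*}
At $q=1$ the left side has total weight $4\cdot 6=24$, which is not a multiple of $|[e,3421]|=18$. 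The cause is structural: $ws_J$ is the \emph{minimal} element of $wW_J$, so elements of $W_J$ below $u$ inflate only some cosets. In your terms, (a) forces $K=\emptyset$, since $s_1$ is the only right descent of $u$. Then (c) would require every element of $[e,u]$ to be a minimal $W_J$-coset representative, which fails because $s_2\le u$.

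\emph{The repair.} Delete the letter $n$ instead of reversing the tail. Set $u=w\,s_ds_{d+1}\cdots s_{n-1}$, which in one-line notation is $w_1\cdots w_{d-1}\,w_{d+1}\cdots w_n\,n$.
\begin{itemize}
\item This $u$ is a pattern of $w$ followed by a fixed point, so it is smooth.
\item Its positions $d,\dots,n-1$ carry the decreasing word $w_{d+1}\cdots w_n$, so $u$ is maximal in $uW_{J'}$ with $J'=[d,n-1]$.
\item Your steps now go through with $K=J'$. Step (a) is automatic, and step (b) produces $[n-d]_q!$.
\item Step (c) becomes easy. Both $u$ and $u^{J'}=w^J$ fix $n$, so every element below them does. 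Together with projection to $W^{J'}$, this gives $W^{J'}\cap[e,u]=W^J\cap[e,w^J]$.
\end{itemize}
Thus $\wtc{u}{q}\,\wtc{s_{[d,n]}}{q}=[n-d]_q!\,\wtc{w}{q}$, the Hecke-algebra lift of Gasharov's identity $P_w(q)=[n-d+1]_qP_u(q)$. For $w=3421$ it reads $\wtc{3214}{q}\wtc{s_{[2,4]}}{q}=[2]_q\wtc{3421}{q}$.

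\emph{Remaining issue: normalization.} The new factors $\prod_{i<m}|I_{i,m}|_q!$ of $f_{\mathcal A}(q)$ depend on which factorization of $u$ the induction hands you, not only on $u$. For example, $([1,3],[1,2])$ is a parabolic factorization of $3214$, because $\wtc{s_{[1,3]}}{q}\wtc{s_{[1,2]}}{q}=[2]_q\wtc{s_{[1,3]}}{q}$. Appending $[2,4]$ gives $I_{1,3}=\{3\}$ and $I_{2,3}=\{2\}$, so $f_{\mathcal A}(q)^{-1}$ times the product equals $[2]_q\,\wtc{3421}{q}$, not $\wtc{3421}{q}$. You therefore need a strengthened induction hypothesis. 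For instance, require that all points of $[d,n-1]$ be last covered by a single interval of $\mathcal A'$, and check that this propagates through both the $w$ and the $w^{-1}$ case. This also affects your claim that $\prod_{i<m}|I_{i,m}|_q!$ is a parabolic Poincar\'e polynomial: the sets $I_{i,m}$ need not be intervals.
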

The combination of Theorems~\ref{t:BWHex}, \ref{t:SkanSmooth} is not the strongest result possible.
Indeed, the known parabolic factorization $\wtc{4231}q = \wtc{s_{[1,2]}}q \wtc{s_{[2,4]}}q \wtc{s_{[1,2]}}q$ is guaranteed by neither theorem.

\section{Reduction of defects}\label{s:defectremoval} 


Our defect reduction theorem
asserts that if a star network can be covered by a path family having $d$ defects, then it can also be covered by a path family of the same type having $d-1$ defects. In certain simple cases, we can easily produce a $(d-1)$-defect family from a $d$-defect family by swapping a pair of subpaths.
For example, consider the 
star network and path families
\begin{equation*}
F_{[1,2]} \circ F_{[1,2]} = 
\begin{tikzpicture}[scale=.7,baseline=25]
  \pgfmathsetmacro{\k}{2}  
  \pgfmathsetmacro{\n}{2}  
  \pgfmathsetmacro{\indexOffset}{0.4}
  \pgfmathsetmacro{\xstart}{0}
  \pgfmathsetmacro{\xstep}{0.5}

  \foreach \index in {1, ..., \n} {
    \node at (\xstart - \indexOffset, \index) {$\scriptstyle \index$};
    \node at (\xstart + \xstep*2*\k + \indexOffset, \index) {$\scriptstyle \index$};
  }
  
   \draw[-,  thick]
    (0,1) -- (0.5,1.5) -- (1,1) -- (1.5,1.5) -- (2,1);
    \draw[-, thick]
    (0,2) -- (0.5,1.5) -- (1,2) -- (1.5,1.5) -- (2,2);
    \fill (.5, 1.5) circle (1mm);  \fill (1.5, 1.5) circle (1mm);
    \fill (0, 1) circle (1mm);  \fill (0, 2) circle (1mm);
    \fill (2, 1) circle (1mm);  \fill (2, 2) circle (1mm);
    \node at (0.55, 1) {$\scriptstyle x_1$};  \node at (1.55, 1) {$\scriptstyle x_2$};
\end{tikzpicture}, 
\qquad
\pi = 
\begin{tikzpicture}[scale=.7,baseline=25]
  \pgfmathsetmacro{\k}{2}  
  \pgfmathsetmacro{\n}{2}  
  \pgfmathsetmacro{\indexOffset}{0.4}
  \pgfmathsetmacro{\xstart}{0}
  \pgfmathsetmacro{\xstep}{0.5}


  \foreach \index in {1, ..., \n} {
    \node at (\xstart - \indexOffset, \index) {$\scriptstyle \index$};
    \node at (\xstart + \xstep*2*\k + \indexOffset, \index) {$\scriptstyle \index$};
  }

   \draw[-, ultra thick, dashed]
    (0,1) -- (0.5,1.5) -- (1,2) -- (1.5,1.5) -- (2,2);
    \draw[-, very thick, color=blue]    
    (0,2) -- (0.5,1.5) -- (1,1) -- (1.5,1.5) -- (2,1);
     \fill (.5, 1.5) circle (1mm);  \fill (1.5, 1.5) circle (1mm);
    \fill (0, 1) circle (1mm);  \fill (0, 2) circle (1mm);
    \fill (2, 1) circle (1mm);  \fill (2, 2) circle (1mm);
     \node at (0.55, 1) {$\scriptstyle x_1$};  \node at (1.55, 1) {$\scriptstyle x_2$};
\end{tikzpicture}, 
\qquad \sigma = 
\begin{tikzpicture}[scale=.7,baseline=25]
  \pgfmathsetmacro{\k}{2}  
  \pgfmathsetmacro{\n}{2}  
  \pgfmathsetmacro{\indexOffset}{0.4}
  \pgfmathsetmacro{\xstart}{0}
  \pgfmathsetmacro{\xstep}{0.5}

  \foreach \index in {1, ..., \n} {
    \node at (\xstart - \indexOffset, \index) {$\scriptstyle \index$};
    \node at (\xstart + \xstep*2*\k + \indexOffset, \index) {$\scriptstyle \index$};
  }

   \draw[-, ultra thick, dashed]
    (0,1) -- (0.5,1.5) -- (1,1) -- (1.5,1.5) -- (2,2);
    \draw[-, very thick, color=blue]    
    (0,2) -- (0.5,1.5) -- (1,2) -- (1.5,1.5) -- (2,1);
     \fill (.5, 1.5) circle (1mm);  \fill (1.5, 1.5) circle (1mm);
    \fill (0, 1) circle (1mm);  \fill (0, 2) circle (1mm);
    \fill (2, 1) circle (1mm);  \fill (2, 2) circle (1mm);
     \node at (0.55, 1) {$\scriptstyle x_1$};  \node at (1.55, 1) {$\scriptstyle x_2$};
\end{tikzpicture},
\end{equation*}
with $\dfct(\pi) = 1$, 
$\sigma$ constructed from $\pi$ by swapping the two $x_1$-to-$x_2$ subpaths of $\pi$,
and $\dfct(\sigma) = 0$.
On the other hand,
this simple procedure does not always reduce defects by $1$.
Consider the star network and path families of type $3124$
\begin{equation*}
F_{[1,3]} \bullet F_{[3,4]} \bullet F_{[1,3]} \bullet F_{[3,4]} = 
\begin{tikzpicture}[scale=.5,baseline=30]
  \pgfmathsetmacro{\k}{4}  
  \pgfmathsetmacro{\n}{4}  
  \pgfmathsetmacro{\indexOffset}{0.4}
  \pgfmathsetmacro{\xstart}{0}
  \pgfmathsetmacro{\xstep}{0.5}

  \foreach \index in {1, ..., \n} {
    \node at (\xstart - \indexOffset, \index) {$\scriptstyle \index$};
    \node at (\xstart + \xstep*2*\k + \indexOffset, \index) {$\scriptstyle \index$};
  }
  \node at (1.5, 2.4) {$\scriptstyle _{(2)}$};

   \draw[-, thick]
    (0,1) -- (0.5,2) -- (1,3) -- (2,4) -- (3,4) -- (3.5,3.5)  -- (4,3);

    \draw[-, thick]
    (0,2) -- (0.5,2) -- (2.5,2) -- (3,1) -- (4,1);

    \draw[-, thick]
    (0,3) -- (0.5,2) -- (4,2);

    \draw[-, thick]
    (0,4) -- (1,4) -- (2.5,2) -- (4,4);






\fill (0, 1) circle (1.3mm);  \fill (0, 2) circle (1.3mm); \fill (0, 3) circle (1.3mm);  \fill (0, 4) circle (1.3mm);
\fill (4, 1) circle (1.3mm);  \fill (4, 2) circle (1.3mm); \fill (4, 3) circle (1.3mm);  \fill (4, 4) circle (1.3mm);
\fill (0.5, 2) circle (1.3mm);  \fill (1.43, 3.45) circle (1.3mm); \fill (2.5, 2) circle (1.3mm);  \fill (3.57, 3.45) circle (1.3mm);
   
\node at (0.75, 1.5) {$\scriptstyle x_1$};  
\node at (1.5, 4.15) {$\scriptstyle x_2$}; 
\node at (2.25, 1.5) {$\scriptstyle x_3$};
\node at (3.45, 4.1) {$\scriptstyle x_4$};
\end{tikzpicture}, 
\quad\pi = 
\begin{tikzpicture}[scale=.5,baseline=30]
  \pgfmathsetmacro{\k}{4}  
  \pgfmathsetmacro{\n}{4}  
  \pgfmathsetmacro{\indexOffset}{0.4}
  \pgfmathsetmacro{\xstart}{0}
  \pgfmathsetmacro{\xstep}{0.5}

  \foreach \index in {1, ..., \n} {
    \node at (\xstart - \indexOffset, \index) {$\scriptstyle \index$};
    \node at (\xstart + \xstep*2*\k + \indexOffset, \index) {$\scriptstyle \index$};
  }
   \node at (1.5, 2.4) {$\scriptstyle{_{(2)}}$};

   \draw[-, ultra thick, dashed]
    (0,1) -- (0.5,2.1) -- (1,3) -- (2,4) -- (3,4) -- (3.5,3.5) -- (4,3);

    \draw[-, ultra thick, color=green]
    (0,2) -- (0.5,1.9) -- (2.5,1.9) -- (3,1) -- (4,1);

    \draw[-, thick, dotted]
    (0,3) -- (0.5,2) -- (4,2);

    \draw[-, ultra thick, color=blue]
    (0,4) -- (1,4) -- (2.5,2.1) -- (4,4);

\fill (0, 1) circle (1.3mm);  \fill (0, 2) circle (1.3mm); \fill (0, 3) circle (1.3mm);  \fill (0, 4) circle (1.3mm);
\fill (4, 1) circle (1.3mm);  \fill (4, 2) circle (1.3mm); \fill (4, 3) circle (1.3mm);  \fill (4, 4) circle (1.3mm);
\fill (0.5, 2) circle (1.3mm);  \fill (1.43, 3.45) circle (1.3mm); \fill (2.5, 2) circle (1.3mm);  \fill (3.57, 3.45) circle (1.3mm);
   
\node at (0.75, 1.5) {$\scriptstyle x_1$};  
\node at (1.5, 4.15) {$\scriptstyle x_2$}; 
\node at (2.25, 1.5) {$\scriptstyle x_3$};
\node at (3.45, 4.1) {$\scriptstyle x_4$};
   
\end{tikzpicture}, \quad
\sigma = 
\begin{tikzpicture}[scale=.5,baseline=30]
  \pgfmathsetmacro{\k}{4}  
  \pgfmathsetmacro{\n}{4}  
  \pgfmathsetmacro{\indexOffset}{0.4}
  \pgfmathsetmacro{\xstart}{0}
  \pgfmathsetmacro{\xstep}{0.5}

  \foreach \index in {1, ..., \n} {
    \node at (\xstart - \indexOffset, \index) {$\scriptstyle \index$};
    \node at (\xstart + \xstep*2*\k + \indexOffset, \index) {$\scriptstyle \index$};
  }
   \node at (1.5, 2.4) {$\scriptstyle _{(2)}$};

   \draw[-, ultra thick, dashed]
    (0,1) -- (0.5,2.1) -- (1,3) -- (1.5, 3.5) -- (2.5, 2.1) -- (3.5, 3.5) -- (4,3);

    \draw[-, ultra thick, color=green]
    (0,2) -- (0.5,1.9) -- (2.5,1.9) -- (3,1) -- (4,1);

    \draw[-, thick, dotted]
    (0,3) -- (0.5,2) -- (4,2);

    \draw[-, ultra thick, color=blue]
    (0,4) -- (1,4) -- (1.5, 3.5) -- (2, 4) -- (3, 4) -- (3.5, 3.5) -- (4,4);

\fill (0, 1) circle (1.3mm);  \fill (0, 2) circle (1.3mm); \fill (0, 3) circle (1.3mm);  \fill (0, 4) circle (1.3mm);
\fill (4, 1) circle (1.3mm);  \fill (4, 2) circle (1.3mm); \fill (4, 3) circle (1.3mm);  \fill (4, 4) circle (1.3mm);
\fill (0.5, 2) circle (1.3mm);  \fill (1.43, 3.45) circle (1.3mm); \fill (2.5, 2) circle (1.3mm);  \fill (3.57, 3.45) circle (1.3mm);
   
\node at (0.75, 1.5) {$\scriptstyle x_1$};  
\node at (1.5, 4.15) {$\scriptstyle x_2$}; 
\node at (2.25, 1.5) {$\scriptstyle x_3$};
\node at (3.45, 4.1) {$\scriptstyle x_4$};
   
\end{tikzpicture},
\end{equation*}
with $\dfct(\pi) = 1$, and $\sigma$ constructed from $\pi$ by
swapping the $x_2$-to-$x_4$ subpaths of
$\pi_1$ and $\pi_4$.  The swap
eliminates the defect $(\pi_1, \pi_4, 4)$,
but introduces two more:
$(\sigma_1, \sigma_2, 3)$, $(\sigma_1, \sigma_3, 3)$.
Thus we have $\dfct(\sigma) = 2$.
There is in fact a type-$3124$ path family 
\begin{equation*}
\tau = 
\begin{tikzpicture}[scale=.5,baseline=30]
  \pgfmathsetmacro{\k}{4}  
  \pgfmathsetmacro{\n}{4}  
  \pgfmathsetmacro{\indexOffset}{0.4}
  \pgfmathsetmacro{\xstart}{0}
  \pgfmathsetmacro{\xstep}{0.5}

  \foreach \index in {1, ..., \n} {
    \node at (\xstart - \indexOffset, \index) {$\scriptstyle \index$};
    \node at (\xstart + \xstep*2*\k + \indexOffset, \index) {$\scriptstyle \index$};
  }
   \node at (1.5, 2.4) {$\scriptstyle _{(2)}$};

   \draw[-, ultra thick, dashed]
    (0,1) -- (0.5, 1.9) -- (2.5, 1.9)  -- (3.5, 3.5) -- (4,3);

    \draw[-, ultra thick, color=green]
    (0,2) -- (2.5,2) -- (3,1) -- (4,1);

    \draw[-, thick, dotted]
    (0,3) -- (0.5, 2.1) -- (1, 3) -- (1.5, 3.5) -- (2.5, 2)  -- (4,2);

    \draw[-, ultra thick, color=blue]
    (0,4) -- (1,4) -- (1.5, 3.5) -- (2, 4) -- (3, 4) -- (3.5, 3.5) -- (4,4);

\fill (0, 1) circle (1.3mm);  \fill (0, 2) circle (1.3mm); \fill (0, 3) circle (1.3mm);  \fill (0, 4) circle (1.3mm);
\fill (4, 1) circle (1.3mm);  \fill (4, 2) circle (1.3mm); \fill (4, 3) circle (1.3mm);  \fill (4, 4) circle (1.3mm);
\fill (0.5, 2) circle (1.3mm);  \fill (1.43, 3.45) circle (1.3mm); \fill (2.5, 2) circle (1.3mm);  \fill (3.57, 3.45) circle (1.3mm);
   
\node at (0.75, 1.5) {$\scriptstyle x_1$};  
\node at (1.5, 4.15) {$\scriptstyle x_2$}; 
\node at (2.25, 1.5) {$\scriptstyle x_3$};
\node at (3.45, 4.1) {$\scriptstyle x_4$};
   
\end{tikzpicture}
\end{equation*}
satisfying 
$\dfct(\tau) = \dfct(\pi)-1 = 0$, but the naive approach above does not produce it from $\pi$.

To describe the process of reducing defects in a path family by exactly $1$, we begin
by stating a map which modifies a path family by removing a defect at a specified vertex,
possibly creating earlier defects.
For $F = F_{[a_1,b_1]} \bullet \cdots \bullet F_{[a_m,b_m]}$ and $k\in\{2, \dotsc, m\}$, 
define
\begin{equation}\label{equ: phik}
\begin{aligned}
    \phi_k : \{\pi \in \Pi(F) \,|\, \pi \text{ has a defect at } x_k
    \} &\rightarrow \Pi(F)\\
    \pi &\mapsto \hat \pi\end{aligned}
    \end{equation}
by the following algorithm.
\begin{alg}\label{a:removeEarlyDefectRandomk}
    Given star network 
    $F = F_{[a_1, b_1]} \bullet 
    \cdots \bullet F_{[a_m, b_m]} \in \cnet n$, 
    path family $\pi \in \Pi(F)$,
    and index $k$ such that $\pi$ has a defect at $\ctr k$,
    do 
    \begin{enumerate}
    \item Let ($r, t$) be the lexicographically least pair such that $(\pi_r, \pi_t, k)$ is a defect.
    \item 
    Let $s$ be the largest index such that $\pi_s$ enters vertex $\ctr k$
    through the same edge as $\pi_r$ and $(\pi_s, \pi_t, k)$ is a defect.
    \item Let $\ctr l$ be the final vertex in the 
    unique crossing of 
    $\pi_s$ and $\pi_t$ prior to $\ctr k$. 
    \item Create a new path family $
    \hat \pi = (\hat \pi_1, \dotsc, \hat \pi_n)$ by \begin{enumerate}
        \item $\hat \pi_i = \pi_i$ if $i \not \in \{s,t\}$.
        \item $\hat \pi_s$ is $\pi_s$ with the
        $\ctr l$-to-$\ctr k$ subpath replaced by that of $\pi_t$.
        \item $\hat \pi_t$ is $\pi_t$ with the
        $\ctr l$-to-$\ctr k$ subpath replaced by that of $\pi_s$.
       \end{enumerate}
    \end{enumerate}
\end{alg}


\begin{prop}\label{p: one less defect condensed-adjusted}
Algorithm~\ref{a:removeEarlyDefectRandomk} produces a path family
$\phi_k(\pi)=\hat\pi$ that satisfies 
    \begin{enumerate}
        \item $\type(\hat \pi) = \type(\pi)$,
        \item for each $p > k$, we have $\{(i,j) \,|\, (\hat\pi_i, \hat\pi_j, p) \text{ defective}\} = \{(i,j) \,|\, (\pi_i, \pi_j, p) \text{ defective}\}$,      
        \item 
        $\#\{(i,j) \,|\, (\hat\pi_i, \hat\pi_j, k) \text{ defective}\} = \#\{(i,j) \,|\, (\pi_i, \pi_j, k) \text{ defective}\} - 1$. 
    \end{enumerate}
\end{prop}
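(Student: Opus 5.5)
The plan is to use one planarity principle throughout. For two paths $\pi_i,\pi_j$ with $i<j$ in $F\in\cnet n$, the parity of the number of crossings they make before a vertex $\ctr p$ is determined by their relative vertical position as they enter $\ctr p$. The count is odd exactly when $\pi_i$ enters $\ctr p$ on an edge lying above the edge used by $\pi_j$. I will justify this by induction along the vertices $\ctr 1,\dotsc,\ctr m$. The relative order of two paths changes only across a component $(\ctr k,\dotsc,\ctr \ell)$ of their intersection, and it changes exactly when that component is a crossing, by the definition of crossing via entry and exit orders. With this principle, "$(\pi_i,\pi_j,p)$ is a defect" becomes a local statement: $\pi_i,\pi_j$ enter $\ctr p$ on distinct edges, with $\pi_i$'s edge above $\pi_j$'s.

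First I check that $\hat\pi$ lies in $\Pi(F)$. Both $\pi_s$ and $\pi_t$ pass through $\ctr l$ and $\ctr k$, so exchanging their $\ctr l$-to-$\ctr k$ subpaths gives genuine source-to-sink paths. The multiset of edges used is unchanged, so $\hat\pi$ still covers $F$ with the right multiplicities. Sources and sinks are untouched, so $\type(\hat\pi)=\type(\pi)$, which is (1).

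For (2), fix $p>k$. After $\ctr k$, the path $\hat\pi_s$ agrees with $\pi_s$ and $\hat\pi_t$ agrees with $\pi_t$, and all other paths are unchanged. Hence every path enters $\ctr p$ on the same edge as before, with the same source label. By the planarity principle, the set of defective pairs at $\ctr p$ is unchanged.

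For (3), the edges by which paths enter $\ctr k$ are the same as before, except that $\hat\pi_s$ now uses $\pi_t$'s entry edge $e_t$ and $\hat\pi_t$ uses $\pi_s$'s entry edge $e_s$. Since $(\pi_s,\pi_t,k)$ was a defect with $s<t$, $e_s$ lies above $e_t$, so the pair $(s,t)$ is no longer defective. It remains to show that for every other $i$ the number of defective pairs among $\{i,s\},\{i,t\}$ at $\ctr k$ is unchanged. If $\pi_i$ enters on an edge outside the closed range between $e_t$ and $e_s$, or if $i<s$ or $i>t$ in a way making both pairs behave identically, exchanging the labels $s,t$ merely permutes the two pairs, and the count is preserved.

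The main obstacle is the remaining cases. These are the indices $i$ with $s<i<t$ whose entry edge lies between $e_t$ and $e_s$ or equals one of them, and the indices $i$ entering on $e_s$ itself. Here I will use the extremal choices in Algorithm~\ref{a:removeEarlyDefectRandomk}.

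Lexicographic minimality of $(r,t)$ forbids defects $(\pi_{r'},\pi_{t'},k)$ with $r'<r$, or with $r'=r$ and $t'<t$. Maximality of $s$ among paths sharing $\pi_r$'s entry edge that form a defect with $\pi_t$ forbids the offending configurations on $e_s$. I expect these to rule out any $i$ with $s<i<t$ entering strictly between $e_t$ and $e_s$ that would produce a net gain or loss.

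I will also need the uniqueness claim in step 3 of the algorithm: $\pi_s$ and $\pi_t$ cross exactly once before $\ctr k$. I will derive it from the fact that each star vertex corresponds to an interval reversal, so two paths can reverse their relative order at most once between consecutive meetings. This uniqueness is what ensures that the swap does not alter crossings of $\pi_s,\pi_t$ with third paths before $\ctr l$ in a way visible at $\ctr k$. I would carry out this case analysis carefully, checking it against the example $\pi,\sigma,\tau$ above.
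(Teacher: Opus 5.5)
Since the paper omits this proof, your argument can only be judged on its own terms. Its backbone is sound. Between components of $\pi_i\cap\pi_j$ the two paths are disjoint and keep their vertical order, and a component reverses that order exactly when it is a crossing. So for $i<j$ the triple $(\pi_i,\pi_j,p)$ is a defect iff the two paths enter $x_p$ on distinct edges with $\pi_i$'s edge above $\pi_j$'s. This gives (1) and (2) as you say.

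The step that would fail is your plan to prove that $\pi_s$ and $\pi_t$ cross exactly once before $x_k$: for general $k$ this is false. Take $F_{[1,2]}\bullet F_{[2,3]}\bullet F_{[1,2]}\bullet F_{[2,3]}\bullet F_{[1,2]}\bullet F_{[2,3]}\bullet F_{[1,2]}\in\cnet{3}$. Let $\pi_1,\pi_2$ cross at $x_1,x_3,x_5$, and let no other pair cross before $x_7$. Then $(\pi_1,\pi_2,7)$ is a defect with $r=s=1$ and $t=2$, but $\pi_1,\pi_2$ have crossed three times. Your proposed reason bounds the number of reversals per meeting, not the number of crossing meetings. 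Uniqueness does hold when $x_k$ is the \emph{earliest} defect of $\pi$: a second crossing would begin at a vertex where $\pi_s,\pi_t$ meet after one crossing, which is an earlier defect. Otherwise step 3 must be read as ``the last crossing'', which exists because the count is odd.

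More importantly, you do not need uniqueness at all. Your remark that it keeps earlier crossings from being ``visible at $x_k$'' contradicts your own principle. At $x_p$ with $p\ge k$, only entry edges matter. The swap changes nothing there except exchanging the entry edges $e_s,e_t$ of $s$ and $t$ at $x_k$. So (1)--(3) hold for any common vertex $x_l$ of $\pi_s,\pi_t$ with $l<k$. The specific choice of $x_l$ matters only for the defects created at $x_l,\dots,x_{k-1}$, i.e., for the next proposition.

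What remains is the case analysis for (3), which you leave as an expectation. It closes at once, and no ``gain'' can occur. Write $e\succ e'$ if $e$ enters $x_k$ above $e'$, and write $[P]\in\{0,1\}$ for the truth value of $P$.

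For $i<s$ or $i>t$, the swap merely exchanges the statuses of the two pairs joining $i$ to $s$ and to $t$.

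For $s<i<t$ with $\pi_i$ entering $x_k$ on $e_i$, the number of defects among the pairs $(s,i),(i,t)$ changes from $[e_s\succ e_i]+[e_i\succ e_t]$ to $[e_t\succ e_i]+[e_i\succ e_s]$. This change is $0$ unless $e_s\succeq e_i\succeq e_t$, in which case it is $-1$ or $-2$. Those cases cannot occur:
\begin{enumerate}
\item If $e_i=e_s$, then $\pi_i$ enters on $\pi_r$'s edge, $(\pi_i,\pi_t,k)$ is defective, and $i>s$, contradicting the maximality of $s$.
\item If $e_s\succ e_i\succeq e_t$, then $(\pi_r,\pi_i,k)$ is a defect with $r\le s<i<t$, contradicting the lexicographic minimality of $(r,t)$.
\end{enumerate}
Adding the loss of the pair $(s,t)$ itself gives a net change of exactly $-1$.
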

\begin{proof}
Omitted.
 \end{proof}
We may further modify the path family $\hat \pi$ 
to produce a path family having no
defects at $x_1, \dotsc, x_{k-1}$.
\begin{prop}\label{p: zero defects condensed}
    Fix $F = F_{[a_1, b_1]} \bullet 
    \cdots \bullet F_{[a_m, b_m]} \in \cnet n$,
    and path family
    $\pi \in \Pi_v(F)$ having earliest 
    defect at $x_k$.
    Then there exists a path family $\sigma \in \Pi_v(F)$ which satisfies 
    \begin{enumerate}
        \item for all $p \neq k$, we have $\{(i,j) \,|\, (\sigma_i, \sigma_j, p) \text{ defective}\} = \{(i,j) \,|\, (\pi_i, \pi_j, p) \text{ defective}\}$,   
        \item for $p = k$, we have $\#\{(i,j) \,|\, (\sigma_i, \sigma_j, p) \text{ defective}\} = \# \{(i,j) \,|\, (\pi_i, \pi_j, p) \text{ defective}\}-1$.
    \end{enumerate}
\end{prop}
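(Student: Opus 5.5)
We will obtain $\sigma$ by applying $\phi_k$ once and then repeatedly removing whatever defects this creates at the vertices $x_1,\dotsc,x_{k-1}$. We use a backward-in-time induction on the position of the earliest defect.

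\textbf{Step 1.} Apply Proposition~\ref{p: one less defect condensed-adjusted} to $\pi$ at $x_k$, and set $\hat\pi=\phi_k(\pi)$. Then $\hat\pi\in\Pi_v(F)$. The defect sets of $\hat\pi$ and $\pi$ agree at every $x_p$ with $p>k$, and $\hat\pi$ has exactly one fewer defect at $x_k$. Before the swap, $\pi$ had no defects at $x_1,\dotsc,x_{k-1}$. The swap only changes the $x_l$-to-$x_k$ subpaths of $\pi_s,\pi_t$, so new defects can appear only at vertices $x_p$ with $l<p<k$.

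\textbf{Step 2.} Define a clean-up procedure. While the current family $\rho$ has a defect at some $x_p$ with $p<k$, let $p$ be the \emph{largest} such index and replace $\rho$ by $\phi_p(\rho)$. By Proposition~\ref{p: one less defect condensed-adjusted}, this preserves the type. It leaves the defect sets at every $x_{p'}$ with $p'>p$ unchanged; in particular, the sets at $x_k$ and beyond stay those of $\hat\pi$. It also lowers the number of defects at $x_p$ by one. Any newly created defects lie strictly before $x_p$.

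\textbf{Step 3: termination.} Attach to $\rho$ the vector $(d_{k-1}(\rho),d_{k-2}(\rho),\dotsc,d_1(\rho))$, where $d_p$ counts defects at $x_p$. Each step of Step 2 strictly decreases this vector in lexicographic order: the coordinates for $p'>p$ are unchanged and the coordinate $d_p$ drops by one. The coordinates are nonnegative integers, so lexicographic order on $\mathbb N^{k-1}$ is a well-order, and the process stops. The final family $\sigma$ has no defects before $x_k$, which matches $\pi$. Its defect sets after $x_k$ equal those of $\pi$, and it has one fewer defect at $x_k$. Since $d_{k-1}$ is bounded, the fact that lower coordinates can increase arbitrarily causes no problem. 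Note also that $x_1$ can never carry a defect, so the process never needs $\phi_1$.

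\textbf{Main obstacle.} Condition 2 concerns the cardinality at $x_k$, but condition 1 demands equality of the \emph{sets} of pairs $(i,j)$ for $p>k$; Proposition~\ref{p: one less defect condensed-adjusted} gives exactly this. The delicate point is that Proposition~\ref{p: one less defect condensed-adjusted} is stated with no hypotheses on earlier defects. We rely on it holding when $\rho$ has defects at several earlier vertices. We also rely on the implicit claim that $\phi_p$ alters paths only between $x_l$ and $x_p$, so it creates no defects after $x_p$; this is already encoded in item 2 of that proposition. If that claim needed extra care, we would verify it from Algorithm~\ref{a:removeEarlyDefectRandomk}. The paths outside the segment from $x_l$ to $x_p$ are unchanged, and the parity of crossings between $\hat\pi_s$ and $\hat\pi_t$ is preserved at all vertices after $x_p$. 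Hence later defect sets are untouched.
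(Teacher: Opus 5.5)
Your proof is correct and follows the route the paper indicates: its proof is omitted, but the sentence just before the statement describes exactly this, namely apply $\phi_k$ and then clear the defects it creates at $x_1,\dotsc,x_{k-1}$ by further applications of Proposition~\ref{p: one less defect condensed-adjusted}, with termination guaranteed by your lexicographic measure. On the point you flagged, Algorithm~\ref{a:removeEarlyDefectRandomk} refers to \emph{the unique} crossing of $\pi_s$ and $\pi_t$ before $x_k$, which is automatic only when there is no earlier defect, so processing the \emph{earliest} remaining defect rather than the largest keeps every call to $\phi_p$ in that setting while your lexicographic vector still strictly decreases.
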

\begin{proof}
Omitted.
\end{proof}

By Proposition~\ref{p: zero defects condensed}, we see that a path family having type $v$ and $d$ defects (i.e., in $\Pi_{v,d}(F)$) implies the existence of other path families of type $v$ having $d-1,\dotsc, 0$ defects.

\begin{thm}\label{t:decrease defects by 1}
    Fix star network $F \in \cnet n$.  If for some $v \in \sn$ and $d \geq 1$ the set $\Pi_{v,d}(F)$ is nonempty, then the sets $\Pi_{v,d-1}(F),\dotsc, \Pi_{v,0}(F)$ are also nonempty, and $|\Pi_{v,0}(F)| = 1$.
\end{thm}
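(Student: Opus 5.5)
The nonemptiness claims will follow by iterating Proposition~\ref{p: zero defects condensed}; the uniqueness claim needs a separate argument, for which I propose two routes.

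\emph{Nonemptiness.} Take $\pi \in \Pi_{v,d}(F)$ with $d \geq 1$, and let $x_k$ be its earliest defect. Proposition~\ref{p: zero defects condensed} produces $\sigma \in \Pi_v(F)$. Its defect sets agree with those of $\pi$ at every vertex $p \neq k$, and it has exactly one fewer defect at $x_k$. Summing over all vertices gives $\dfct(\sigma) = d-1$, so $\sigma \in \Pi_{v,d-1}(F)$. Induction on $d$ then shows that $\Pi_{v,d-1}(F), \dotsc, \Pi_{v,0}(F)$ are all nonempty. No extra hypothesis on the earliest defect is needed at each step, because any family with at least one defect has an earliest defect.

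\emph{Uniqueness of the defect-free family.} By the first part, $\Pi_{v,0}(F)$ is nonempty whenever $\Pi_v(F)$ is. It remains to show $|\Pi_{v,0}(F)| \leq 1$, and I would try two routes.

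The algebraic route goes as follows. Write $F = F_{[a_1,b_1]} \bullet \cdots \bullet F_{[a_m,b_m]}$. By \eqref{eq:Gtozhnq}, $|\Pi_{v,0}(F)|$ is the constant term of the coefficient of $T_v$ in $\frac{1}{g(q)}\wtc{s_{[a_1,b_1]}}q \cdots \wtc{s_{[a_m,b_m]}}q$, where $g(q)$ is the product of the $q$-factorials $m(x_i,x_j)_q!$. Each of these $q$-factorials has constant term $1$. Each factor $\wtc{s_{[a,b]}}q = \sum_{u \leq s_{[a,b]}} T_u$ specializes at $q=0$ to a sum of $T_u$ in the $0$-Hecke algebra. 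Setting $q = 0$, the constant term becomes the coefficient of $T_v$ in a product of such sums, and I would prove that every such coefficient is $0$ or $1$. The relevant facts are that in the $0$-Hecke algebra we have $T_{s_i}T_w = -T_w$ when $s_iw<w$, and that $\sum_{u\le w_0(J)}T_u$ is $\prod(1+T_{s})$ over a reduced word. The sign issue makes this delicate, so I would instead pass to the monoid version: the generators $1+T_{s_i}$ satisfy $(1+T_{s_i})^2 = (1+q)(1+T_{s_i})$. Hence at $q=0$ the products are governed by Demazure products, where every subset product collapses. I am not fully confident this is right, so I would first check it on the $3124$ example.

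The combinatorial route, which I expect to be cleaner, argues directly on path families. Process the vertices $x_1, \dotsc, x_m$ in order. Suppose two defect-free families $\pi \neq \tau$ of type $v$ first differ at $x_k$. The paths entering $x_k$ pair up incoming edges with outgoing edges. Since neither family has a defect there, any two paths meeting at $x_k$ have so far crossed an even number of times, hence they are ordered as at their sources. I would show that this forces the same set of path indices onto each outgoing edge in both families, via a planar order argument: paths that have not effectively crossed must exit in a way consistent with their final destinations. This determines the families edge by edge.

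The main obstacle is this last step. The final type $v$ constrains the order of exits only globally, so a forward induction may not suffice. I would first try a backward induction from the sinks, combined with the even-crossing parity condition. If that fails, I would fall back on the $q=0$ specialization argument.
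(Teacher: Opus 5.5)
Your nonemptiness argument is the paper's own: the sentence just before the theorem obtains families of type $v$ with $d-1,\dotsc,0$ defects by iterating Proposition~\ref{p: zero defects condensed}, exactly as you do. The paper's proof is omitted, and nothing in its text justifies $|\Pi_{v,0}(F)|=1$. So for that claim your $q=0$ route supplies an argument the paper does not give, and the route works; you can drop the hedging.

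You are right to use the standard rule $T_sT_w=(q-1)T_w+qT_{sw}$ for $sw<w$; the rule displayed in Section~\ref{s:hnqplanar} has $T_w$ and $T_{s_iw}$ transposed, and your argument needs the standard one. At $q=0$, write $\Sigma_w=\sum_{u\le w}T_u$. For $sw>w$ one gets $(T_e+T_s)\Sigma_w=\sum_{u\le w,\,su>u}(T_u+T_{su})$, because the terms $-T_u$ with $su<u$ cancel exactly. The lifting property identifies this sum with $\Sigma_{sw}$. Idempotence of $T_e+T_s$ then gives $(T_e+T_s)\Sigma_w=\Sigma_w$ when $sw<w$.

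Expanding $\Sigma_{w_1}$ along a reduced word gives $\Sigma_{w_1}\Sigma_{w_2}=\Sigma_{w_1*w_2}$, where $*$ is the Demazure product. Hence the product of the $\widetilde C_{s_{[a_j,b_j]}}$ specializes to $\Sigma_W$ with $W=s_{[a_1,b_1]}*\cdots*s_{[a_m,b_m]}$. Since $g(0)=1$, this gives $|\Pi_{v,0}(F)|=1$ for $v\le W$ and $0$ otherwise. No sign issue survives, and no separate braid-relation check is needed.

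This route is short, but it rests on the Clearwater--Skandera identification \eqref{eq:Gtozhnq}, so it says nothing intrinsic about path families. If you want a combinatorial proof, forward induction cannot work by itself. The zero-defect condition at $x_k$ depends only on how the paths arrive at $x_k$, so the exit choices there are fixed only by later vertices and by the type.

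Backward induction is the right direction. In the wiring-diagram case \eqref{eq:wirediagprod} it is the standard fact that, reading right to left from $v$, a zero-defect subexpression must take $s_{i_j}$ exactly when right multiplication by $s_{i_j}$ lowers the current element. That forces uniqueness for any word, reduced or not. Extending it to vertices of higher degree and to condensed edges is real extra work that you do not need.
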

\begin{proof}
Omitted.
\end{proof}

\section{Main Result}\label{s:nofactor}

Every polynomial in $\mathbb N[q]$ with constant term $1$ arises as a Kazhdan--Lusztig polynomial~\cite{PoloConstrArb}.  Gaetz--Gao~\cite{GGMinPower} studied the sequences of coefficients in these polynomials, especially coefficients equal to 
$0$ 
between other
nonzero coefficients.
Define a function $\singdeg: \sn \rightarrow \mathbb N \cup \{\infty\}$ by
\begin{equation}\label{eq:gammadef}
\singdeg(w) = \begin{cases}
    \infty &\text{if $P_{e,w}(q) = 1$},\\
    \min\{ k > 0 \,|\, 
    \text{coefficient of $q^k$ in $P_{e,w}(q)$ is nonzero}\} 
    &\text{if $P_{e,w}(q) \neq 1$}.
    \end{cases}    
\end{equation}
This is 
a lower bound on degrees for which
Poincar\'e duality fails in the Schubert variety $X_w$, and can be computed in terms of patterns in $w$ and a related definition.
Specifically, given $w \in \sn$
not avoiding the pattern $3412$,
define the {\em 
3412-gap} of $w$
by
\begin{equation}\label{eq:cortezgap}
\pgap(w) = \min\{ w_{i_1} - w_{i_4} \,|\, \text{subword $w_{i_1} w_{i_2} w_{i_3} w_{i_4}$ 
matches the pattern $3412$} \}.
\end{equation}
For $w$ \avoidingp, we have $\singdeg(w) = \infty$.  Otherwise, we can compute $\singdeg(w)$ in terms of $\pgap(w)$ as follows
~\cite[Thm.\,1.6]{GGMinPower}.
\begin{thm}\label{t:GaetzGao}
    For $w$ not \avoidingp{}
    we have
    \begin{equation*}
        \singdeg(w) = \begin{cases}
            \pgap(w) &\text{if $w$ avoids the pattern $4231$},\\
            1 &\text{otherwise}.
        \end{cases}
    \end{equation*}
\end{thm}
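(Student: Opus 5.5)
Since $\singdeg(w)$ is the index of the first nonzero non-constant coefficient of $P_{e,w}(q)$, the natural route is to turn it into a statement about the ranks of the Bruhat interval $[e,w]$. For $0 \le k \le \ell(w)$, let $r_k(w)$ be the number of $v \le w$ with $\ell(v)=k$. I would use the theorem of Bj\"orner--Ekedahl on the Bruhat graph of a Schubert variety $X_w$, which follows from hard Lefschetz for intersection cohomology. It gives $r_k(w) \le r_{\ell(w)-k}(w)$ for $k \le \ell(w)/2$. It also gives the following: if the coefficients of $q^1,\dotsc,q^{k-1}$ in $P_{e,w}(q)$ vanish, then $r_i(w)=r_{\ell(w)-i}(w)$ for $i<k$, and the coefficient of $q^k$ equals $r_{\ell(w)-k}(w)-r_k(w)$. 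The first step is therefore the reformulation
\[
\singdeg(w) = \min\{k>0 \,|\, r_k(w) \neq r_{\ell(w)-k}(w)\},
\]
with the convention $\min\emptyset=\infty$. This is consistent with the case of $w$ avoiding $3412$ and $4231$ noted above, where $P_{e,w}=1$ (\cite{LakSan}) and the ranks are symmetric.

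The second step treats the case where $w$ contains $4231$, where we need $\singdeg(w)=1$, that is, $r_1(w)<r_{\ell(w)-1}(w)$. Here $r_1(w)$ counts the distinct generators $s_i$ appearing in a reduced word for $w$. The number $r_{\ell(w)-1}(w)$ counts transpositions $t$ with $wt \lessdot w$, equivalently edges of the Bruhat graph at $w$ going down by exactly one. I would show that each $4231$ occurrence produces an extra coatom. Take a $4231$ occurrence that is minimal with respect to the positions and values in between. The middle transposition exchanging the values ``$4$'' and ``$1$'' then gives a cover $wt \lessdot w$, and I would check that this cover is not accounted for by the injection from atoms to coatoms that exists in the smooth case. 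A cleaner alternative is to cite the known fact that the coefficient of $q$ in $P_{e,w}$ is nonzero exactly when the tangent-space (Bruhat graph) degree at $e$ exceeds $\ell(w)$, together with the fact that the $4231$ patterns are detected at degree one.

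The third step treats $w$ avoiding $4231$ but containing $3412$; this is the step I expect to be the main obstacle. Let $g=\pgap(w)$, and fix a $3412$ occurrence $w_{i_1}w_{i_2}w_{i_3}w_{i_4}$ attaining $w_{i_1}-w_{i_4}=g$. I would show two things.
\begin{enumerate}
\item For $k<g$, $r_k(w)=r_{\ell(w)-k}(w)$. I would try to build a rank-reversing bijection between the bottom $k$ levels and the top $k$ levels of $[e,w]$, valid as long as $k$ is smaller than the number of values available to ``fill'' the gap between $w_{i_4}$ and $w_{i_1}$.
\item At $k=g$, $r_g(w)<r_{\ell(w)-g}(w)$.
\end{enumerate}
To prove these, I would reduce to a canonical local model. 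Using $4231$-avoidance and the minimality of the gap, the entries whose values lie strictly between $w_{i_4}$ and $w_{i_1}$ are forced into a rigid configuration: they must sit in a decreasing/increasing block between the ``$4$'' and the ``$1$''. This reduces the count to permutations like $(g{+}1)(g{+}2)\cdots$ interleaved with a $3412$ core. For that model I would compute $r_k$ and $r_{\ell-k}$ directly, and then argue that the rest of $w$ contributes symmetrically, via a product or parabolic decomposition of the interval, so that it does not alter the first asymmetric level.

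The delicate part is controlling the contribution of the remainder of $w$. I would first attempt an induction on $n$, deleting an entry not involved in the minimal-gap $3412$ and not inside its gap, and showing that the rank generating function of $[e,w]$ changes by a symmetric factor up to degree $g$. If that fails, the fallback is to work with $P_{e,w}$ directly: use the Kazhdan--Lusztig recursion along a descent outside the $3412$ block, together with the known formula for Kazhdan--Lusztig polynomials of covexillary permutations to pin down $P_{e,w}$ modulo $q^{g+1}$.
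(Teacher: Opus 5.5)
The paper gives no proof of this theorem; it quotes it from Gaetz--Gao \cite{GGMinPower}. Your proposal therefore has to stand on its own, and it does not.

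Step~1 is sound. Palindromicity of $\sum_{v\le w}q^{\ell(v)}P_{v,w}(q)$, together with monotonicity of Kazhdan--Lusztig polynomials, gives the Bj\"orner--Ekedahl statement you quote. Hence $\singdeg(w)=\min\{k>0 \,|\, r_k(w)\ne r_{\ell(w)-k}(w)\}$.

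Step~2 rests on two false claims.
\begin{enumerate}
\item The transposition $t$ exchanging the `4' and the `1' of a $4231$ occurrence at positions $a<b<c<d$ is never a cover, whatever minimality you impose. The entries in positions $b$ and $c$ lie strictly between $w_d$ and $w_a$ in value, so $\ell(w)-\ell(wt)\ge 5$. Moreover, the ``injection from atoms to coatoms'' that your extra coatom is supposed to escape is never defined, and exhibiting one coatom does not prove $r_{\ell(w)-1}(w)>r_1(w)$.
\item Your ``known fact'' is refuted by the paper's own example $w=45312$. There $\#\{t\le w\}=9>8=\ell(w)$, as it must be since $w$ contains $3412$ and $X_w$ is singular. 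Yet $P_{e,w}(q)=1+q^2$ has zero linear coefficient. Excess Bruhat-graph degree at $e$ detects singularity, not a nonzero coefficient of $q$.
\end{enumerate}

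Step~3 carries the real content of the theorem, and it is a plan rather than an argument. You produce neither the rank-reversing bijection for $k<\pgap(w)$ nor the strict inequality at $k=\pgap(w)$. Your rigidity observation is correct: minimality of the gap alone forces the $g-1$ intermediate values to lie, in decreasing order, between the `4' and the `1'. But it says nothing about the other entries of $w$, and the claim that they ``contribute symmetrically'' up to degree $g$ is exactly what needs proof. The rank generating function of $[e,w]$ factors along a parabolic decomposition only in the Billey--Postnikov situation, and you give no reason to expect that here. Deleting an entry of $w$ also does not, in general, change that generating function by a symmetric factor.

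The fallback fails outright. Lascoux's formula covers covexillary, i.e.\ $3412$-avoiding, permutations, but your $w$ contains $3412$. Recursing along a descent outside the minimal-gap occurrence leaves that occurrence intact in $ws$, so the formula never becomes available.

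What is missing is a global comparison of $r_k(w)$ with $r_{\ell(w)-k}(w)$ for all $k\le\pgap(w)$, and of $r_1(w)$ with $r_{\ell(w)-1}(w)$ when $4231$ occurs. That comparison is precisely the substance of Gaetz--Gao's work, which the paper imports rather than reproves.
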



For example, consider the permutation $45312 \in \mfs 5$, 
which 
avoids the pattern $4231$ and has
only the subword $4512$ matching the pattern $3412$. Since $\pgap(45312)=2$, Theorem~\ref{t:GaetzGao} implies that the coefficient of $q$ in $P_{e,45312}(q)$ is $0$ and that the coefficient of $q^2$ is not.
This is consistent with the fact that $P_{e,45312}(q) = 1+q^2$. (See \cite[p.\,75]{BilleyLak}.)

For each permutation $w$ having $\singdeg(w) > 1$,
the Kazhdan--Lusztig basis element $\smash{\wtc wq}$ has no parabolic factorization 
\eqref{eq:reversalfactor} and therefore is not graphically representable 
by a star network, in the sense of (\ref{eq:Gtozhnq}).
\begin{thm}\label{t:main}
    For $w \in \sn$ 
    avoiding the pattern $4231$, 
    not avoiding the pattern $3412$,
    and having $\pgap(w) > 1$, 
    the Kazhdan--Lusztig basis element $\wtc wq$ 
    has no parabolic factorization.
\end{thm}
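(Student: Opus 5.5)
We argue by contradiction. Suppose $w$ satisfies the hypotheses and $\wtc wq$ has a parabolic factorization \eqref{eq:reversalfactor} for some sequence $\mathcal A$ of intervals. Let $F = F_{[a_1,b_1]} \bullet \cdots \bullet F_{[a_m,b_m]} \in \cnet n$. The first step is to check that the denominator $f_{\ntnsp\mathcal A}(q)$ is exactly the product of the $q$-factorials $m(x_i,x_j)_q!$ over the edges of $F$. This holds because the multiplicity of the condensed edge $(x_i,x_j)$ is the number of strands leaving $F_{[a_i,b_i]}$ that next enter $F_{[a_j,b_j]}$, namely $|I_{i,j}|$ from \eqref{eq:intervaloverlap}. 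Then, by the Clearwater--Skandera interpretation \eqref{eq:Gtozhnq}, the factorization says that $F$ graphically represents $\wtc wq$. Comparing coefficients of $T_v$ with \eqref{eq:klbasis} gives, for every $v \in \sn$,
\begin{equation*}
P_{v,w}(q) = \sum_{d \geq 0} |\Pi_{v,d}(F)|\, q^d .
\end{equation*}

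Now specialize to $v = e$. By the hypotheses, Theorem~\ref{t:GaetzGao} gives $\singdeg(w) = \pgap(w) =: g \geq 2$, and $g$ is finite because $w$ contains $3412$. So $P_{e,w}(q) \neq 1$, the coefficient of $q^g$ is nonzero, and the coefficient of $q^k$ is zero for $0<k<g$; in particular the coefficient of $q^1$ is zero. Hence $\Pi_{e,g}(F) \neq \emptyset$. Theorem~\ref{t:decrease defects by 1} then forces $\Pi_{e,g-1}(F), \dotsc, \Pi_{e,1}(F)$ to be nonempty, so the coefficient of $q^1$ in $P_{e,w}(q)$ is at least $1$. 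This contradicts $g \geq 2$, so no parabolic factorization exists.

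All the real difficulty is in Theorem~\ref{t:decrease defects by 1}, which we may assume. The step needing care here is the identification of $f_{\ntnsp\mathcal A}(q)$ with the edge-multiplicity denominator, so that the factorization of $\wtc wq$ is literally the element represented by $F$, with no extra scalar. We would verify this directly from the definition of condensed concatenation. The strands entering $x_j$ from $x_i$ are indexed by the positions in $[a_i,b_i]\cap[a_j,b_j]$ not absorbed by any intermediate interval $[a_p,b_p]$ with $i<p<j$, and that set is exactly $I_{i,j}$. We would also note that $P_{e,w}(0)=1$ is consistent with $|\Pi_{e,0}(F)|=1$, a sanity check on the identification.
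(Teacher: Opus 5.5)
Your proof is correct and is the paper's argument: take $F=F_{[a_1,b_1]}\bullet\cdots\bullet F_{[a_m,b_m]}$ representing $\wtc wq$, use Theorem~\ref{t:GaetzGao} to get a zero coefficient of $q^1$ (the paper says $q,\dotsc,q^{k-1}$) together with a nonzero coefficient of $q^{\pgap(w)}$ in $P_{e,w}(q)$, and contradict Theorem~\ref{t:decrease defects by 1}. Your explicit check that $f_{\mathcal A}(q)=\prod m(x_i,x_j)_q!$, via $m(x_i,x_j)=|I_{i,j}|$, is correct and is a step the paper leaves implicit when it passes from the factorization to ``$F$ graphically represents $\wtc wq$''.
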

\begin{proof}
Fix $w$ as above with $k = \pgap(w)$ and suppose that the star network $F$
graphically represents $\smash{\wtc wq}$ as an element of $\hnq$,
\begin{equation}\label{eq:main}
\wtc wq = \sum_{v \in \sn} \sum_{d \geq 0} |\Pi_{v,d}(F)|
q^d T_v = \sum_{v \leq w} P_{v,w}(q) T_v.
\end{equation}
Since the constant term of $P_{e,w}(q)$ is $1$,
we have $|\Pi_{e,0}(F)| = 1$.
By our definition of $k$
and Theorem~\ref{t:GaetzGao}, the coefficients of $q, \dotsc, q^{k-1}$
in $P_{e,w}(q)$ are $0$, while the coefficient of $q^k$ is positive.
In particular, we have the cardinalities $|\Pi_{e,1}(F)| = \cdots = |\Pi_{e,k-1}(F)| = 0$ and $|\Pi_{e,k}(F)| > 0$, which
contradict
Theorem~\ref{t:decrease defects by 1}.
\end{proof}
By Theorem~\ref{t:decrease defects by 1}, parabolic factorization of $\wtc wq$ implies that {\em none} of the Kazhdan--Lusztig polynomials $P_{v,w}(q)$ has internal coefficients equal to zero.
\begin{thm}
    If $\wtc wq$ has a parabolic factorization,
    then for every $v < w$ there exists $k = k(v)$
    in $\mathbb N$ 
    such that we have $P_{v,w}(q) = 1 + a_1q + \cdots + a_kq^k$ with $a_1,\dotsc, a_k > 0$.
\end{thm}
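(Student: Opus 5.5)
The plan is to mirror the proof of Theorem~\ref{t:main}, replacing the single polynomial $P_{e,w}(q)$ by an arbitrary $P_{v,w}(q)$ with $v<w$.

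\textbf{Step 1: translate the factorization into a star network.} Suppose $\wtc wq$ has a parabolic factorization \eqref{eq:reversalfactor} with interval sequence $\mathcal A = ([a_1,b_1],\dotsc,[a_m,b_m])$. Let $F = F_{[a_1,b_1]} \bullet \cdots \bullet F_{[a_m,b_m]} \in \cnet n$. We must check that the denominator in the graphical interpretation \eqref{eq:Gtozhnq} agrees with $f_{\ntnsp\mathcal A}(q)$. That is, we show that the multiplicity $m(x_i,x_j)$ of each edge of $F$ equals $|I_{i,j}|$ from \eqref{eq:intervaloverlap}. This holds because a strand $p$ passes from $x_i$ directly to $x_j$ exactly when $p \in [a_i,b_i]\cap[a_j,b_j]$ and $p$ lies in none of the intermediate intervals; pairs with $I_{i,j}=\emptyset$ contribute $0_q!=1$. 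With this, $F$ graphically represents $\wtc wq$, and as in \eqref{eq:main} we obtain, for every $v$ and $d$,
\begin{equation*}
|\Pi_{v,d}(F)| = [q^d]\,P_{v,w}(q),
\end{equation*}
where $P_{v,w}(q)=0$ if $v\not\leq w$.

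\textbf{Step 2: apply Theorem~\ref{t:decrease defects by 1}.} Fix $v<w$ and let $k=\deg P_{v,w}(q)$, which is a natural number. Write $P_{v,w}(q)=1+a_1q+\cdots+a_kq^k$. The constant term is $1$ because $P_{v,w}(0)=1$ for $v\le w$ (equivalently, $|\Pi_{v,0}(F)|=1$ by the theorem). If $k=0$, the statement holds vacuously. If $k\geq 1$, then $a_k>0$, so $\Pi_{v,k}(F)\neq\emptyset$. Theorem~\ref{t:decrease defects by 1} then gives $\Pi_{v,d}(F)\neq\emptyset$ for $d=k-1,\dotsc,0$. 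By Step 1, $a_d=|\Pi_{v,d}(F)|>0$ for $1\le d\le k$.

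\textbf{Expected obstacle.} The only nonroutine point is Step 1: identifying the edge multiplicities of the condensed concatenation with the cardinalities $|I_{i,j}|$. The paper asserts this identification without proof, so I would verify it carefully by tracing each strand $p$ through the consecutive star vertices whose intervals contain it. Once that is settled, the argument is a direct corollary of Theorem~\ref{t:decrease defects by 1}, exactly as in the proof of Theorem~\ref{t:main}.
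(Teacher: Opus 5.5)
Your proposal is correct and follows the paper's approach: the paper also deduces this statement directly from Theorem~\ref{t:decrease defects by 1}, applied to the star network realizing the factorization, just as in the proof of Theorem~\ref{t:main}. Your Step~1 check that the edge multiplicities $m(x_i,x_j)$ of $F_{[a_1,b_1]} \bullet \cdots \bullet F_{[a_m,b_m]}$ equal $|I_{i,j}|$, so that the network's denominator is exactly $f_{\mathcal A}(q)$, fills in a detail the paper leaves implicit, and your verification of it is right.
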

It is easy to show that the inequality $\pgap(w) > 1$ implies that $w$ does not avoid the pattern $45312$. It is also easy to show that no star network graphically represents
$\wtc{453129786}q$, even though the subword $9786$ matches the pattern $4231$. Indeed, some limited experimentation~\cite{DatSkan} suggests that avoidance of 
the pattern $45312$ is important and avoidance of the pattern $4231$ is unimportant in the classification of permutations $w$ for which
$\smash{\wtc wq}$ is graphically representable by a star network.
We conjecture the following partial answer to \cite[Quest.\,4.5]{SkanNNDCB}.
\begin{conj}
    If $w \in \sn$ does not avoid the pattern $45312$, then the Kazhdan--Lusztig basis element $\wtc wq$ has no parabolic factorization.
\end{conj}

\bibliographystyle{eptcs}
\bibliography{my}
\end{document}